\definecolor{DarkGreen}{rgb}{0.1,0.5,0.1}
\Crefname{property}{Property}{Properties}
\Crefname{example}{Example}{Examples}
\Crefname{table}{Table}{Tables}
\Crefname{remark}{Remark}{Remarks}
\Crefname{rmk}{Remark}{Remarks}
\Crefname{dfn}{Definition}{Definitions}
\Crefname{thm}{Theorem}{Theorems}
\Crefname{cor}{Corollary}{Corollaries}
\Crefname{lem}{Lemma}{Lemmas}
\Crefname{examplex}{Example}{Examples}
\Crefname{prop}{Proposition}{Propositions}
\theoremstyle{plain}
\newtheorem{theorem}{Theorem}
\newtheorem{proposition}{Proposition}
\newtheorem{lemma}{Lemma}
\theoremstyle{definition}
\newtheorem{definition}{Definition}
\newtheorem{example}{Example}
\newtheorem{remark}{Remark}
\renewcommand{\tilde}{\widetilde}
\renewcommand{\bar}{\overline}
\newcommand*\dif{\mathop{}\!\mathrm{d}}
\newcommand{\E}{\mathbb{E}}
\renewcommand{\le}{\leqslant}
\renewcommand{\leq}{\leqslant}
\renewcommand{\ge}{\geqslant}
\renewcommand{\geq}{\geqslant}
\newcommand{\set}[1]{\left\{{#1}\right\}}
\newcommand{\bfy}{\boldsymbol{y}}
\newcommand{\prop}{\mathsf{prop}}
\newcommand{\mms}{\mathsf{mms}}
\newcommand{\usw}{\mathsf{usw}}
\newcommand\e{1-\sfrac{1}{e}}
\newcommand{\half}{\sfrac{1}{2}}
\newcommand{\CEF}[1]{\ifstrempty{#1}{\textrm{\textup{CEF1}}}{$#1$-\textrm{\textup{CEF1{}}}}}
\newcommand{\cef}[1]{\ifstrempty{#1}{\textrm{\textup{CEF}}}{$#1$-\textrm{\textup{CEF{}}}}}
\newcommand{\PEF}[1]{\ifstrempty{#1}{\textrm{\textup{PEF1}}}{$#1$-\textrm{\textup{PEF1{}}}}}
\newcommand{\pef}[1]{\ifstrempty{#1}{\textrm{\textup{PEF}}}{$#1$-\textrm{\textup{PEF{}}}}}
\newcommand{\CPROP}[1]{\ifstrempty{#1}{\textrm{\textup{CPROP}}}{$#1$-\textrm{\textup{CPROP{}}}}}
\newcommand{\CMMS}[1]{\ifstrempty{#1}{\textrm{\textup{CMMS}}}{$#1$-\textrm{\textup{CMMS{}}}}}
\newcommand{\USW}[1]{\ifstrempty{#1}{\textrm{\textup{USW}}}{$#1$-\textrm{\textup{USW{}}}}}
\newcommand{\NW}[1]{\textrm{\textup{NW}}}
\DeclareMathOperator*{\argmin}{\arg\min}
\newcommand{\algdetindiv}{\textsc{Match-and-Shift}\xspace}
\newcommand{\algdetdiv}{\textsc{Equal-Filling}\xspace}
\newcommand{\algrandindiv}{\textsc{Equal-Filling-OCS}\xspace}
\newcommand{\algdiscussion}{\textsc{Equal-Ranking}\xspace}
\newcommand{\ranking}{\textsc{Ranking}\xspace}
\begin{document}

\title{Class Fairness in Online Matching}
\date{}

\author[1]{Hadi Hosseini}
\author[2]{Zhiyi Huang}
\author[3]{Ayumi Igarashi}
\author[4]{Nisarg Shah}

\affil[1]{Pennsylvania State University\\
	{\small\url{hadi@psu.edu}}}
\affil[2]{University of Hong Kong\\
	{\small\url{zhiyi@cs.hku.hk}}}
\affil[3]{National Institute of Informatics\\
	{\small\url{ayumi_igarashi@nii.ac.jp}}}
\affil[4]{University of Toronto\\
	{\small\url{nisarg@cs.toronto.edu}}}

\maketitle

\begin{abstract}
In the classical version of online bipartite matching, there is a given set of offline vertices (aka agents) and another set of vertices (aka items) that arrive online. When each item arrives, its incident edges---the agents who like the item---are revealed and the algorithm must irrevocably match the item to such agents. 

We initiate the study of \emph{class fairness} in this setting, where agents are partitioned into a set of classes and the matching is required to be fair with respect to the classes. We adopt popular fairness notions from the fair division literature such as envy-freeness (up to one item), proportionality, and maximin share fairness to our setting. Our class versions of these notions demand that all classes, regardless of their sizes, receive a fair treatment. We study deterministic and randomized algorithms for matching indivisible items (leading to integral matchings) and for matching divisible items (leading to fractional matchings). 

We design and analyze three novel algorithms. For matching indivisible items, we propose an adaptive-priority-based algorithm, \algdetindiv, prove that it achieves $\half$-approximation of both class envy-freeness up to one item and class maximin share fairness, and show that each guarantee is tight. For matching divisible items, we design a water-filling-based algorithm, \algdetdiv, that achieves $(\e)$-approximation of class envy-freeness and class proportionality; we prove $\e$ to be tight for class proportionality and establish a $\sfrac{3}{4}$ upper bound on class envy-freeness. Finally, we build upon \algdetdiv to design a randomized algorithm for matching indivisible items, \algrandindiv, which achieves $0.593$-approximation of class proportionality. The algorithm and its analysis crucially leverage the recently introduced technique of online correlated selection (OCS)~\citep{fahrbach2020edge}.
\end{abstract}

\section{Introduction}

The one-sided matching problem is a fundamental subject within economics and computation that deals with the matching of a set of items to a set of agents. Its primary objective is to ensure desirable normative properties such as \textit{economic efficiency} and \textit{fairness}. 
The advent of Internet economics along with the introduction of novel marketplaces has posed new challenges in designing desirable solutions for which, as noted by \citet{moulin2019fair}, ``\textit{we need division rules that are both transparent and agreeable, in other words, fair.}''
A wide array of these applications are inherently online, that is, items (or goods) arrive in an online fashion, and need to be matched immediately and irrevocably to the participating agents: consider the examples of allocating advertisement slots to Internet advertisers \citep{mehta2007adwords}, assigning packets to output ports in switch routing \citep{azar2005management}, distributing food donations among nonprofit charitable organizations \citep{lee2019webuildai}, and matching riders to drivers in ridesharing platforms \citep{banerjee2019ride}.
Over the past few decades, a large body of literature---within the field of online algorithm design---is devoted to the study of \textit{online bipartite matching} problems. Their primary goal is to satisfy some notion of economic efficiency---e.g. maximizing the size of the final matching---with no knowledge of which items will arrive in the future and in what order. Algorithms designed for this problem are judged by their \emph{competitive ratio}, which is the worst-case approximation ratio of the size of the matching produced to the maximum possible size in hindsight. It is well known that the best deterministic algorithm can only achieve a $\half$-approximation of this efficiency goal, e.g., by using a greedy algorithm to get a maximal matching. Notably, the seminal work of \citet{karp1990optimal} provides a randomized algorithm called \ranking with the best possible $(\e)$-approximation.

While the literature offers online algorithms with optimal efficiency guarantees, little work has been done in ensuring that these algorithms treat agents, or rather, classes of agents \emph{fairly}. 
Consider the example of a food bank that wishes to distribute the donated items among nonprofit organizations and homeless shelters. The perishable food items donated to the food bank must be assigned upon their arrival. How should an online matching algorithm distribute these donations to the nonprofits and shelters in such a manner that the communities they serve are treated equitably?

\paragraph{Class fairness.} We initiate the study of class fairness in online matching, where a set of items arriving online must be assigned to agents, who are partitioned into known classes, with the goal of achieving fairness among classes. Agents either like an item (value $1$) or don't like it (value $0$). We adopt classical notions from the fair division literature that typically apply to individual agents---such as \textit{envy-freeness} (EF), \textit{proportionality} (Prop), and \textit{maximin share guarantee} (MMS)---to classes of agents. Our extensions ensure that different classes are treated equally, 
regardless of their sizes (e.g., in the food bank example above, different communities are treated equally, even if some have many more organizations serving them). 

Consider, for example, the appealing notion of envy-freeness, which, when applied to individual agents, demands that no agent envy the resources given to another agent. 
When applied to classes, our \emph{class envy-freeness} (\cef{}) notion requires that no class of agents be able to increase their total value by taking the items matched to another class, \emph{even if} it assigns these items optimally among its members. 
With indivisible items (which must be assigned entirely to a single agent), a class envy-free matching may not always exist: consider a single item to be divided between two classes with one agent each liking the item. In the standard fair division model, this impossibility has motivated relaxations such as envy-freeness up to one item (EF1), which can be guaranteed~\cite{lipton2004approximately}. When applied to classes, our \textit{class envy-freeness up to one item} (\CEF{}) requires that envy of any class towards another class to be eliminated after the removal of at most one item that is matched to an agent within the envied class.
%
In the offline setting wherein all items are available up front, it is known that \CEF{} can be achieved without unnecessarily throwing away items~\citep{benabbou2020}.\footnote{We later formalize the latter restriction as \emph{non-wastefulness} (\NW{}). This is required because \CEF{}, on its own, can be achieved vacuously via an empty matching by throwing away all the items.} Can it still be achieved in the online setting?

\paragraph{Impossibility of \CEF{} in online matching.} First, note that a classical algorithm that is blind to the class information can easily violate \CEF{}. For example, if there are two classes containing two agents each, and two items arrive that are liked by all four agents, the algorithm may end up assigning both items to agents from the same class, rendering the other class envious even if we remove one of the items. This simple example is easy to fix via a ``class-aware'' algorithm that pays attention to the classes: simply assign the second item to an agent from the class that did not receive the first item. Alas, a slightly larger example shows that even class-aware online algorithms cannot always achieve \CEF{}. 

\begin{figure}
    \centering
    \footnotesize
    \begin{tikzpicture}[scale=0.6]
        
        \node[draw, circle, fill=pink] (a1) at (-3, 0){$a_{1}$};
        \node[draw, circle, fill=pink] (a2) at (-3, -1.2){$a_{2}$};
        \node[draw, circle, fill=pink] (a3) at (-3, -2.4){$a_{3}$};
        
        \node[draw, circle, fill=cyan] (b1) at (3, 0){$b_{1}$};
        \node[draw, circle, fill=cyan] (b2) at (3, -1.2){$b_{2}$};
        \node[draw, circle, fill=cyan] (b3) at (3, -2.4){$b_{3}$};
        
        \node[draw, circle] (o1) at (0, 0){$o_{1}$};
        \node[draw, circle] (o2) at (0, -1.2){$o_{2}$};
        \node[draw, circle] (o3) at (0, -2.4){$o_{3}$};
        \node[draw, circle] (o4) at (0, -3.6){$o_{4}$};
        
        \path[-, line width=0.5mm] (a1) edge (o1);
        \path[-] (b1) edge (o1);
        
        \path[-] (a2) edge (o2);
        \path[-, line width=0.5mm] (b2) edge (o2);
        
        \path[-] (a3) edge (o3);
        \path[-, line width=0.5mm] (b3) edge (o3);
        
        \path[-] (a1) edge (o4);
        \path[-, line width=0.5mm] (b1) edge (o4);
    \end{tikzpicture}
    \caption{An adversarial instance where \CEF{} cannot be achieved together with non-wastefulness.}
    \label{fig::GEF_non}
\end{figure}
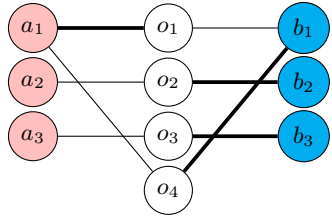

\begin{example} \label{exm:CEF1vNW}
Consider the example shown in \Cref{fig::GEF_non}, in which six agents are partitioned into two classes $N_1 = \{a_1, a_2, a_3\}$ and $N_2 = \{b_1, b_2, b_3\}$, and four items arrive sequentially in the order $(o_1, o_2, o_3, o_4)$. An edge between an agent and an item indicates that the agent likes the item; thick edges indicate the matching. Let us assume that we do not wish to throw away any item as long as there is an unmatched agent who likes it; we later formalize this as \emph{non-wastefulness}. 

For $i \in \set{1,2,3}$, item $o_i$ is liked by agents $a_i$ and $b_i$. The first item $o_1$ can be matched to either $a_1$ and $b_1$; without loss of generality, suppose it is matched to $a_1 \in N_1$. When the second item $o_2$ arrives, note that it must be matched to $b_2 \in N_2$ in order to satisfy \CEF{}. The third item $o_3$ can again be matched to either of $a_3$ and $b_3$; without loss of generality, suppose it is matched to $b_3 \in N_2$. Now, the fourth item $o_4$ arrives, and the algorithm learns that it is liked only by $a_1$ (who is already matched) and $b_1$ (who is unmatched). The algorithm must assign it to $b_1$ due to non-wastefulness, which leaves class $N_1$ envious of class $N_2$, even if we ignore any one of the items assigned to $N_2$.  
\end{example}

Given this impossibility, we seek online matching algorithms that achieve the fairness notions \emph{approximately}, often in conjunction with approximate efficiency guarantees. We aim to answer the following theoretical questions:

\begin{quote} \textit{Can we design deterministic algorithms for matching indivisible or divisible items that achieve approximate class fairness while adhering to efficiency requirements? And, can we surpass their guarantees by using randomization?}
\end{quote}

\subsection{Our Results}
We initiate the study of fairness among classes of agents in online bipartite matching. Our first contribution (\Cref{sec:model}) is developing a detailed mathematical framework in which we adopt classical fairness concepts to online matching. 
We consider two types of online matching models, one with indivisible items, wherein an item must be matched in its entirety to a single agent, and one with divisible items, wherein an item may be fractionally divided between multiple agents.

For both settings, we design online algorithms that achieve approximate fairness and efficiency guarantees, and also provide upper bounds on the approximations that can be achieved by any online algorithm. Our algorithms satisfy non-wastefulness, which implies $\half$-approximation of the optimal utilitarian social welfare (\USW{}); the utilitarian social welfare, i.e., the sum of agent utilities, is effectively the size of the matching. Specifically, we make the following contributions (summarized in \Cref{tab:summary}):

\begin{itemize}
    \item \textbf{Indivisible matching}: When items are indivisible, we develop a deterministic algorithm, \algdetindiv, that simultaneously achieves non-wastefulness, \CEF{\half}, \CMMS{\half}, and \USW{\half} (\cref{thm:CEF1}). The algorithm uses an adaptive priority queue over classes, in which a class is shifted to the end of the queue immediately upon receiving an item. 
    Further, we prove that no deterministic algorithm can achieve any of \CEF{\alpha} (subject to non-wastefulness), \CMMS{\alpha}, or \USW{\half}, for any $\alpha > \half$ (\Cref{thm:CEF1:impossibility}), establishing our algorithm to be simultaneously optimal for each guarantee. 
    
    \item \textbf{Divisible matching}: When items are divisible, we improve the above bounds via a different algorithm, \algdetdiv. This algorithm divides items equally between the classes, but uses water-filling to divide the portion of an item assigned to a class between the agents in that class. This algorithm simultaneously achieves non-wastefulness, \cef{(\e)}, \CPROP{(\e)}, and \USW{\half} (\Cref{thm:divisible}). Furthermore, no deterministic algorithm can achieve \cef{\alpha} for any $\alpha > \sfrac{3}{4}$, or \USW{\alpha} for any $\alpha > \e$, and \CPROP{(\e)} is tight (\cref{thm:divupperbound}).
    
    \item \textbf{Randomized algorithms}: Finally, we propose a randomized algorithm, \algrandindiv, for matching indivisible algorithms that breaks the $\half$ barrier. We run a variant of \algdetdiv to obtain a guiding divisible matching, and round it into an indivisible matching using a technique called online correlated selection (OCS). We prove that it is simultaneously \CPROP{0.593} and \USW{\half} (\Cref{thm:indivisible-randomized}).
\end{itemize}

\begin{table*}[t]
\centering
\resizebox{0.8\textwidth}{!}{%
\footnotesize
\begin{tabular}{llllll}
\multicolumn{3}{c}{\textbf{Indivisible}}                                                                                                                            & \multicolumn{3}{c}{\textbf{Divisible}}                                                             \\ \hline
\multicolumn{1}{|l}{Fairness}                                   & Algorithm     & Upper Bound                                                                        & Fairness          & Algorithm       & \multicolumn{1}{l|}{Upper Bound}                              \\ \hline
\rowcolor[HTML]{CBCEFB} 
\multicolumn{1}{|l}{\cellcolor[HTML]{CBCEFB}$\alpha$-CEF1 + NW} & $\sfrac{1}{2}$ & \multicolumn{1}{l|}{\cellcolor[HTML]{CBCEFB}$\sfrac{1}{2}$}                        & $\alpha$-CEF + NW & $1-\frac{1}{e}$ & \multicolumn{1}{l|}{\cellcolor[HTML]{CBCEFB}$\sfrac{3}{4}$}   \\
\multicolumn{1}{|l}{$\alpha$-CMMS}                              & $\sfrac{1}{2}$ & \multicolumn{1}{l|}{$\sfrac{1}{2}$}                                                & $\alpha$-CPROP    & $1-\frac{1}{e}$ & \multicolumn{1}{l|}{$1-\frac{1}{e}$}                         \\
\rowcolor[HTML]{CBCEFB} 
\multicolumn{1}{|l}{\cellcolor[HTML]{CBCEFB}$\alpha$-USW}       & $\sfrac{1}{2}$ & \multicolumn{1}{l|}{\cellcolor[HTML]{CBCEFB}{\color[HTML]{000000} $\sfrac{1}{2}$}} & $\alpha$-USW      & $\sfrac{1}{2}$   & \multicolumn{1}{l|}{\cellcolor[HTML]{CBCEFB}$1-\frac{1}{e}$} \\ \hline
\end{tabular}%
}
\caption{The summary of our results on deterministic algorithms for matching indivisible and divisible items. Each algorithm achieves its three guarantees simultaneously, while the upper bound holds for any algorithm, separately for each guarantee.}
\label{tab:summary}
\end{table*}

\subsection{Related Work}

\paragraph{Online matching.}
We refer readers to \citet{online_matching_ads} for a survey of the vast literature on online matching, and summarize some results that are the most related to this paper. 
The \ranking algorithm of \citet{karp1990optimal} assigns each item in its entirety; in our model, this corresponds to a randomized algorithm for matching indivisible items that achieves \USW{(\e)}. The case of divisible items is often called \textit{fractional online matching} in the matching literature.\footnote{It is closely related to another model called online $b$-matching in which each offline agent may be matched up to $b$ times. Since the algorithms and analyses are usually interchangeable in these two models, we phrase both models as the case of divisible items.} 
For this, \citet{kalyanasundaram2000optimal} gave a deterministic $(\e)$-competitive algorithm, which achieves \USW{(\e)} in our framework; different papers refer to this algorithm as Balance, Water-filling, or Water-level. 
The \ranking algorithm and its analysis were generalized to the vertex-weighted case by \citet{aggarwal2011online}. 
\citet{feldman2009online} introduced the free disposal model of edge-weighted online matching and gave a $(\e)$-competitive algorithm for divisible items.
The series of works by \citet{fahrbach2020edge}, \citet{shin2021making}, \citet{gao2021improved}, and \citet{blanc2021multiway} led to the state-of-the-art $0.536$-competitive algorithm for edge-weighted online matching with indivisible items.
These works developed a new technique called online correlated selection which we also use in this paper.

The literature also considers stochastic models of online matching problems to break the $\e$ barrier.
\citet{mahdian2011online} and \citet{karande2011online} showed that the competitive ratio of \ranking is between $0.696$ and $0.727$ if online vertices arrive by a random order.
\citet{huang2019online} introduced a variant of \ranking that breaks the $\e$ barrier in vertex-weighted online matching under random-order arrivals; the ratio was further improved to $0.668$~\citep{jin2020improved}.
If items are drawn from a distribution known to the algorithm, it is called \textit{online stochastic matching}~\citep{feldman2009online}.
The best known competitive ratios for unweighted and vertex-weighted online stochastic matching are $0.711$ and $0.700$, respectively~\citep{huang2021online}.

\paragraph{Fair division.}
There is a rich body of literature on fair allocation of indivisible or divisible items. A common assumption in most fair division studies is that there is no constraint on how many items each agent can receive, and agents receive increasing value when receiving more items.

In this literature, envy-freeness and proportionality (and approximations thereof) have been used as the primary criteria of fairness. 
For divisible items, an allocation satisfying both envy-freeness and an economic efficiency notion called Pareto optimality is known to exist~\citep{varian1974equity} and can be computed via convex programming when agents have additive valuations~\citep{eisenberg}.
For indivisible items, two relaxations of envy-freeness are commonly studied: envy-freeness up to one item (EF1)~\citep{lipton2004approximately} and maximin share fairness (MMS)~\citep{budish2011combinatorial}. 
An EF1 allocation is guaranteed to exist with monotone valuations~\citep{lipton2004approximately}, and can be achieved together with Pareto optimality when agents have additive valuations~\citep{caragiannis2016unreasonable}. On the other hand, MMS allocations are not guaranteed to exist, even for additive valuations, though constant factor approximation algorithms~\citep{KurokawaProcacciaJACM,garg2019improved,ghodsi2018fair} and ordinal approximations~\citep{hosseini2021guaranteeing,hosseini2021ordinal} exist and can be computed in polynomial time.

Our problem can be seen as a fair division problem by considering each class to be a meta-agent; the value of this meta-agent for a bundle of items is the maximum total value obtained by matching the items to the agents in the class, which induces OXS valuations~\citep{paesleme2017gs} (these are not additive). \citet{benabbou2019fairness} studied a model similar to ours in the offline setting, and observed that the EF1 algorithm of \citet{lipton2004approximately} may result in a wasteful allocation; nevertheless, they showed that an allocation satisfying EF1 and non-wastefulness exists and can be computed in polynomial time. Subsequent papers~\citep{benabbou2020,babaioff2020fair,BarmanVerma} considered a more general class of submodular valuations with dichotomous marginals and proved that EF1 and optimal \USW{} can be achieved together; \citet{BarmanVerma} proved a similar result for MMS and optimal \USW{}.  

\paragraph{Fairness in online matching.} Our paper is also related to the growing line of work on \emph{online} fair division~\citep{Benade:2018:MEV,gorokh2020online,ZengPsomas2020,walsh2011online,aleksandrov2015online}, but a majority of this work focuses on additive valuations, and hence, their techniques do not apply to our matching setting. Several recent papers are concerned with group fairness in online matching~\citep{MaXu2020,ijcai2021Sankar}. \citet{MaXu2020} studied a stochastic setting wherein the \emph{agents} arrive online (as opposed to the items in our model), following an independent Poisson process with known homogeneous rate; the objective is to maximize the minimum ratio of the number of agents served to the number of agents in each group. \citet{ijcai2021Sankar} studied an online matching problem where the items arrive online. Here, the \emph{items} are 
grouped 
into classes (as opposed to the agents in our model), and each agent specifies capacity constraints, which they referred to as \emph{group fairness constraints}, restricting the number of items from each class that can be assigned to the agent. Due to these crucial differences between their models and ours, their techniques and results do not overlap with ours. 

\section{Model}\label{sec:model}
For $t \in \mathbb{N}$, define $[t] = \set{1,\ldots,t}$. First, let us introduce an offline version of our model and the solution concepts we seek. Later, we will discuss the online model and algorithms in that model. 

Consider a bipartite graph $G = (N, M, E)$, where $N$ represents a set of vertices called agents, $M$ a set of vertices called items, and $E$ the set of edges. We say that agent $a$ {\em likes} item $o$ if $a$ is adjacent to $o$, i.e., $(a,o) \in E$.
The set of agents $N$ is partitioned into $k$ known classes $N_{1}, \ldots, N_{k}$ so that $N_{i}\cap N_{j} = \emptyset$ for all $i \neq j$ and $\cup_{i=1}^k N_i = N$. For simplicity, we refer to class $N_i$ simply as class $i$.

\paragraph{Matching.} We consider the cases of \emph{divisible} items (where each item can be matched to multiple agents fractionally) and \emph{indivisible} items (where each item must be matched to a single agent integrally). A (divisible) \emph{matching} is a matrix $X=(x_{a,o})_{a \in N, o \in M} \in [0,1]^{N \times M}$ satisfying $\sum_{a \in N} x_{a,o} \leq 1$ for each item $o \in M$, and $\sum_{o \in M} x_{a,o} \leq 1$ for each agent $a \in N$. We say that matching $X$ is \emph{indivisible} if $x_{a,o} \in \set{0,1}$ for each agent $a \in N$ and item $o \in M$. Given a matching $X$, we say that agent $a$ is \emph{saturated} if $\sum_{o \in M} x_{a,o} = 1$, and item $o$ is \emph{fully assigned} if $\sum_{a \in N} x_{a,o} = 1$.

For a matching $X$, we write $Y(X)=(\sum_{a \in N_i}x_{a,o})_{i \in [k],o \in M}$ as the matrix containing the total fraction of each item assigned to agents in each class. Let $Y_i(X)$ denote the row of $Y(X)$ corresponding to class $i$. 
For an indivisible matching $X$, we may abuse the notation and use $Y_{i}(X)$ to refer to the set of items matched to agents in class $i$, i.e., $\set{o \in M \mid x_{a,o}=1~\mbox{for some}~a \in N_i}$. We may omit the argument $X$ from $Y(X)$ and $Y_i(X)$ if it is clear from the context. 

\paragraph{Class valuations.} The value derived by agent $a$ from matching $X$ is $V_a(X) = \sum_{o \in M: (a,o) \in E} x_{a,o}$. We define the value of class $i$ from matching $X$ as the utilitarian social welfare of the agents in class $i$ under matching $X$, denoted $V_{i}(X) = \sum_{a \in N_i} V_a(X)$. 

In order to define fairness at the level of classes, we need to also define how much hypothetical value agents in class $i$ could derive from the items matched to agents in another class $j$. However, it is not obvious how one should define this value because it depends on how the items matched to agents in $N_j$ would be matched to agents in $N_i$ in this hypothetical scenario. Following \cite{benabbou2019fairness}, we use the following optimistic valuations.

Given a vector $\bfy=(y_o)_{o \in M} \in [0,1]^M$ representing fractions of different items, the \textit{optimistic valuation} $V^*_i(\bfy)$ of class $i$ for $\bfy$ is the size of the maximum fractional matching between the agents of $N_{i}$ and $\bfy$; namely, $V^*_i(\bfy)$ is given by the optimal value of the following LP: 
\begin{align*}
    \max &~~{\textstyle\sum_{a \in N_i} \sum_{o \in M : (a,o) \in E}}~ x_{a,o}\\
    \mbox{s.t.} &~~\textstyle{\sum_{a \in N_i}}~ x_{a,o} \leq y_o && \forall o \in M,\\
    &~~\textstyle{\sum_{o \in M}}~ x_{a,o} \leq 1 && \forall a \in N_i,\\ 
    &~~x_{a,o} \geq 0 && \forall a \in N_i,o \in M.
\end{align*}

For a set of items $S \subseteq M$, let $\boldsymbol{e}^S \in \{0,1\}^M$ denote the incidence vector such that $\boldsymbol{e}^S_o=1$ if $o \in S$ and $\boldsymbol{e}^S_o=0$ otherwise; we may write $V^*_i(\boldsymbol{e}^S)$ as $V^*_i(S)$ for ease of notation. For an integral vector $\bfy$, it is known that there is an integral optimal solution to the above LP (see, e.g., Section $5$ of~\cite{Korte2006}); thus, $V^*_i(S)$ coincides with the maximum size of an integral matching between $S$ and the agents in $N_i$.  

\subsection{Solution Concepts}
Our goal is to ensure that items are matched to agents in a manner that is fair to agents belonging to different classes. To that end, we consider classical fairness notions from the fair division literature, such as envy-freeness~\citep{george1958puzzle,foley1967resource}, proportionality~\citep{steinhaus1948problem}, and maximin share guarantee~\citep{budish2011combinatorial}, which are typically used to ensure fairness between individual agents. We extend these notions to ensure fairness between classes of agents. 

\paragraph{(Approximate) class envy-freeness.} Envy-freeness between individual agents demands that every agent values the resources allocated to her at least as much as she values the resources allocated to another agent. 
When applied to classes, we compare the value $V_i(X)$ derived by class $i$ for its matched items with class $i$'s optimistic valuation for the items matched to another class $j$, i.e. $V^*_i(Y_j(X))$.
Note that this results in a strong class envy-freeness notion: \emph{even if}, hypothetically, class $i$ were to be matched to the items currently matched to class $j$ under $X$ in an optimal manner, they would still not be any happier overall. 

\begin{definition}[Class envy-freeness]
A matching $X$ is $\alpha$-\emph{class envy-free} (\cef{\alpha}) if for all classes $i,j \in [k]$, $V_{i}(X) \geq \alpha \cdot V_{i}^{*}(Y_{j}(X))$. When $\alpha=1$, we simply refer to it as class envy-freeness (\cef{}). 
\end{definition}

It is impossible to achieve exact \cef{} with an indivisible matching in general.
For example, when one desirable item has to be allocated among two classes, the class which does not receive the item necessarily envies the other class which receives it. Hence, we consider the following relaxation of \cef{} for integral matchings.

\begin{definition}[Class envy-freeness up to one item]
An integral matching $X$ is $\alpha$-\emph{class envy-free up to one item} (\CEF{\alpha}) if for every pair of classes $i,j \in [k]$, either $Y_{j}(X)=\emptyset$ or there exists an item $o \in Y_{j}(X)$ such that $V_{i}(X) \geq \alpha \cdot V_{i}^{*}(Y_{j}(X) \setminus \set{o})$. When $\alpha=1$, we simply refer to it as class envy-freeness up to one item (\CEF{}). 
\end{definition}

We remark that \CEF{} is called type-wise EF1 (TEF1) by \cite{benabbou2019fairness}; we use the terminology ``class'' instead of ``type'' because letting agents of the same ``type'' have different incident edges may be confusing to some readers.

\paragraph{(Approximate) class proportionality and maximin share fairness.} Another classical fairness concept is \emph{proportionality}. In the traditional fair division model where agent valuations are additive and there is no limit to how many items can be assigned to an agent, proportionality is typically stated as requiring that each agent receive value that is at least $\sfrac{1}{n}$-th of her value for the set of all items, where $n$ is the number of agents. This can be equivalently viewed as demanding that each agent receive at least the maximum value she can receive from the worst bundle among all fractional partitions of the items into $n$ bundles. While these two versions are equivalent under additive valuations, they are significantly different under non-additive valuations. For subadditive valuations (like our optimistic valuations), the latter version is stronger. Further, the latter version continues to imply its indivisible counterpart, called maximin share fairness, whereas the former version no longer implies it in our model. For these reasons, we use the latter version as the appropriate definition of proportionality in our model. 

Since we are interested in fairness at the class level, we define the {\em proportional share} of class $i$ as  
\[
\prop_i= \max_{X \in \mathcal{X}} \min_{j \in [k]} V^*_i(Y_j(X)). 
\]
where $\mathcal{X}$ is the set of (divisible) matchings of the set of items $M$ to the set of agents $N$. 

\begin{definition}[Class proportionality]
    We say that matching $X$ is $\alpha$-{\em class proportional} (\CPROP{\alpha}) if for every class $i \in [k]$, $V_i(X) \geq \alpha \cdot \prop_i$. When $\alpha = 1$, we simply refer to it as class proportionality (\CPROP{}). 
\end{definition}

As in the case with class envy-freeness, class proportionality is impossible to guarantee via indivisible matchings. Nevertheless, we can naturally relax the notion of proportionality by only taking into account indivisible matchings in the definition of proportional share above. This naturally adopts the well-studied notion of maximin share fairness to our setting. Formally, the {\em maximin share} of class $i$ is defined as 
\[
\mms_i= \max_{X \in \mathcal{I}} \min_{j \in [k]} V^*_i(Y_j(X)). 
\]
where $\mathcal{I}$ is the set of indivisible matchings of the set of items $M$ to the set of agents $N$.

\begin{definition}[Class maximin share fairness]
    We say that matching $X$ is $\alpha$-{\em class maximin share fair} (\CMMS{\alpha}) if for every class $i \in [k]$, $V_i(X) \geq \alpha \cdot \mms_i$. When $\alpha=1$, we simply refer to it as class maximin share fairness (\CMMS{}).
\end{definition}

For fair division with additive valuations, \citet{segal2019democratic} proved that, subject to allocating every item, EF1 is equivalent to MMS. In contrast, in our model neither implies even an approximation of the other (see \Cref{sec:relation}). 

\paragraph{Efficiency.} We consider two notions of efficiency. \emph{Non-wastefulness} demands that each item to be fully assigned, unless all the agents who like it are saturated. Non-wasteful integral matchings are also known as \emph{maximal} matchings. 

\begin{definition}[Non-wastefulness]
We say that matching $X$ is \emph{non-wasteful} (\NW{}) if there is no pair of agent $a$ and item $o$ such that $a$ likes $o$ (i.e., $(a,o) \in E$), $a$ is not saturated (i.e., $\sum_{o' \in M} x_{a,o'} < 1$), and $o$ is not fully assigned (i.e., $\sum_{a' \in N} x_{a',o} < 1$).
\end{definition}

A more quantitative notion of efficiency is the utilitarian social welfare, which, in our context, is the size of the (divisible) matching. Note that this is the classical objective that the literature on online matching optimizes, in the absence of any fairness constraints. 

\begin{definition}[Utilitarian social welfare]
The {\em utilitarian social welfare} (\USW{}) of a matching $X$ is given by  $\usw(X) = \sum_{a \in N} \sum_{o \in M : (a,o) \in E} x_{a,o}$. We say that a divisible (resp., indivisible) matching $X$ is \USW{\alpha} if $\usw(X) \ge \alpha \cdot \usw(X^*)$ for all divisible (resp., indivisible) matchings $X^*$. When $\alpha=1$, we refer to $X$ as the USW-optimal matching. Note that the benchmarks for the divisible and indivisible cases are identical as the indivisible matching with the highest \USW{} also has the highest \USW{} among all divisible matchings. 
\end{definition}

The following is a known relation between maximal (non-wasteful) and maximum matchings in both divisible and indivisible cases. We provide a proof in the appendix for completeness.
\begin{restatable}{proposition}{NonWastefulHalf}\label{prop:nw-usw}
Every non-wasteful (divisible or indivisible) matching is \USW{\half}.
\end{restatable}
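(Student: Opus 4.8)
The plan is to show that any non-wasteful matching $X$ captures at least half the value of the USW-optimal matching $X^*$ via a simple charging argument. I would first recall that $\usw(X) = \sum_{a \in N} V_a(X)$ equals the total fraction of items assigned under $X$, and that the same quantity for $X^*$ is the benchmark. Since both the divisible and indivisible benchmarks coincide (as noted in the \USW{} definition), it suffices to treat $X$ and $X^*$ as arbitrary matchings with $X$ non-wasteful, and compare sizes edge by edge.

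The core idea is that non-wastefulness forces a local ``covering'' property: for every edge $(a,o) \in E$ that carries positive mass in $X^*$ but is somehow ``uncovered'' by $X$, either agent $a$ is saturated in $X$ or item $o$ is fully assigned in $X$. Concretely, I would bound the value $X^*$ places on each item $o$ by relating it to the mass $X$ places on the agents adjacent to $o$ and on the item $o$ itself. For each item $o$, let $x^*_o = \sum_{a} x^*_{a,o}$ be its assigned fraction in $X^*$ and $x_o = \sum_a x_{a,o}$ in $X$. Non-wastefulness says that if $x_o < 1$ (item not fully assigned), then every agent who likes $o$ must be saturated in $X$; in particular any agent $a$ that $X^*$ matches to $o$ has $V_a(X) = 1$. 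The goal is to argue $\usw(X^*) \le 2\,\usw(X)$.

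The cleanest way to carry this out is a fractional token/charging scheme. I would send, for each unit of mass $x^*_{a,o}$ in $X^*$, one token and split it between the item side and the agent side of $X$: charge the portion up to how full $o$ is in $X$ against $o$'s load in $X$, and charge the remainder against $a$'s saturation in $X$. Formally, for each edge $(a,o)$ I would use the inequality $x^*_{a,o} \le x_o \cdot(\text{something}) + V_a(X)\cdot(\text{something})$, engineered so that summing over all edges yields $\usw(X^*) \le \usw(X)_{\text{item side}} + \usw(X)_{\text{agent side}} = 2\,\usw(X)$. The key enabler is that whenever an edge of $X^*$ is not ``absorbed'' by item $o$ being already full in $X$ (i.e. when $x_o < 1$), non-wastefulness guarantees $a$ is saturated so the agent side pays in full.

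The main obstacle I anticipate is handling the fractional/divisible case cleanly, where an item $o$ might be partially assigned in $X$ ($0 < x_o < 1$) yet non-wastefulness still applies to each adjacent unsaturated agent. I must be careful that the two charges (item-side load $\sum_o x_o = \usw(X)$ and agent-side saturation $\sum_a V_a(X) = \usw(X)$) do not double-count or leave a gap; the right phrasing is, for each item $o$, to bound $x^*_o \le x_o + \sum_{a:(a,o)\in E} x^*_{a,o}\cdot \mathbf{1}[a \text{ saturated}]$, using $x^*_o \le 1$ together with the non-wastefulness dichotomy ($x_o = 1$ or all neighbors saturated). Summing the first term over items gives $\usw(X)$, and summing the second over all agents gives at most $\sum_a V_a(X) = \usw(X)$ since each saturated agent contributes total mass at most $1$ in $X^*$. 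Combining the two sums yields $\usw(X^*) \le 2\,\usw(X)$, i.e. $X$ is \USW{\half}, as desired.
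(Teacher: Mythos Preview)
Your proposal is correct and is essentially the same charging argument as the paper's: both exploit the non-wastefulness dichotomy (for each edge $(a,o)\in E$, either $a$ is saturated or $o$ is fully assigned) to bound each unit of $\usw(X^*)$ by the item-side load plus the agent-side load in $X$, yielding $\usw(X^*)\le 2\,\usw(X)$. The only cosmetic difference is that the paper takes $X^*$ integral without loss of generality and writes the single per-edge inequality $1 \le \sum_{o'} x_{a,o'} + \sum_{a'} x_{a',o}$, whereas you organize the same bound per item with an explicit case split.
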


Let us illustrate the above concepts of fairness and efficiency using examples.
\begin{example}
Consider the example given in \Cref{fig:EF_matching}, where there are four items ($o_1$, $o_2$, $o_3$, and $o_4$), agents $a_1$ and $a_2$ belong to one class, and agents $b_1$ and $b_2$ belong to another class. An edge between an agent and an item indicates that the agent likes the item; thick edges indicate matching. \Cref{fig:model-example-a} shows an empty matching, which is class envy-free (\cef{}) but wasteful. 
\Cref{fig:model-example-b} shows a matching that achieves \CEF{} and non-wastefulness. Finally, \Cref{fig:model-example-c} shows a matching that achieves \CEF{} along with optimal utilitarian social welfare. 
\begin{figure}
    \centering
    \footnotesize
    \begin{subfigure}[b]{0.3\linewidth}
    \centering
    \begin{tikzpicture}[scale=0.6]
        
        \node[draw, circle, fill=pink] (a1) at (-3, 0){$a_{1}$};
        \node[draw, circle, fill=pink] (a2) at (-3, -1.2){$a_{2}$};
        
        \node[draw, circle, fill=cyan] (b1) at (3, 0){$b_{1}$};
        \node[draw, circle, fill=cyan] (b2) at (3, -1.2){$b_{2}$};
        
        \node[draw, circle] (o1) at (0, 0){$o_{1}$};
        \node[draw, circle] (o2) at (0, -1.2){$o_{2}$};
        \node[draw, circle] (o3) at (0, -2.4){$o_{3}$};
        \node[draw, circle] (o4) at (0, -3.6){$o_{4}$};
        
        \path[-] (a1) edge (o1);
        \path[-] (a2) edge (o1);
        \path[-] (a2) edge (o2);
        \path[-] (a2) edge (o3);

        \path[-] (b1) edge (o1);
        \path[-] (b1) edge (o3);
        \path[-] (b1) edge (o4);
     
        \path[-] (b2) edge (o2);

    \end{tikzpicture}
    \caption{\cef{} but wasteful}
    \label{fig:model-example-a}
    \end{subfigure}
    \hfill
    \begin{subfigure}[b]{0.3\linewidth}
    \centering
    \begin{tikzpicture}[scale=0.6]
        
        \node[draw, circle, fill=pink] (a1) at (-3, 0){$a_{1}$};
        \node[draw, circle, fill=pink] (a2) at (-3, -1.2){$a_{2}$};

        \node[draw, circle, fill=cyan] (b1) at (3, 0){$b_{1}$};
        \node[draw, circle, fill=cyan] (b2) at (3, -1.2){$b_{2}$};

        \node[draw, circle] (o1) at (0, 0){$o_{1}$};
        \node[draw, circle] (o2) at (0, -1.2){$o_{2}$};
        \node[draw, circle] (o3) at (0, -2.4){$o_{3}$};
        \node[draw, circle] (o4) at (0, -3.6){$o_{4}$};

        \path[-] (a1) edge (o1);
        \path[-] (a2) edge (o1);
        \path[-] (a2) edge (o2);
        \path[-, line width=0.5mm] (a2) edge (o3);
        
        \path[-, line width=0.5mm] (b1) edge (o1);
        \path[-, line width=0.5mm] (b2) edge (o2);
        \path[-] (b1) edge (o3);
        \path[-] (b1) edge (o4);

    \end{tikzpicture}
    \caption{\CEF{} and \NW{}}
    \label{fig:model-example-b}
    \end{subfigure}
    \hfill
     \begin{subfigure}[b]{0.3\linewidth}
     \centering
    \begin{tikzpicture}[scale=0.6]
        
        \node[draw, circle, fill=pink] (a1) at (-3, 0){$a_{1}$};
        \node[draw, circle, fill=pink] (a2) at (-3, -1.2){$a_{2}$};

        \node[draw, circle, fill=cyan] (b1) at (3, 0){$b_{1}$};
        \node[draw, circle, fill=cyan] (b2) at (3, -1.2){$b_{2}$};

        \node[draw, circle] (o1) at (0, 0){$o_{1}$};
        \node[draw, circle] (o2) at (0, -1.2){$o_{2}$};
        \node[draw, circle] (o3) at (0, -2.4){$o_{3}$};
        \node[draw, circle] (o4) at (0, -3.6){$o_{4}$};

        \path[-, line width=0.5mm] (a1) edge (o1);
        \path[-] (a2) edge (o1);
        \path[-] (a2) edge (o2);
        \path[-, line width=0.5mm] (a2) edge (o3);
        
        \path[-] (b1) edge (o1);
        \path[-, line width=0.5mm] (b2) edge (o2);
        \path[-] (b1) edge (o3);
        \path[-, line width=0.5mm] (b1) edge (o4);

    \end{tikzpicture}
    \caption{\CEF{} and \USW{1}}
    \label{fig:model-example-c}
    \end{subfigure}
    \caption{Class envy-freeness (\cef{}), non-wastefulness (\NW{}), and utilitarian social welfare approximation (\USW{}).}
    \label{fig:EF_matching}
\end{figure}
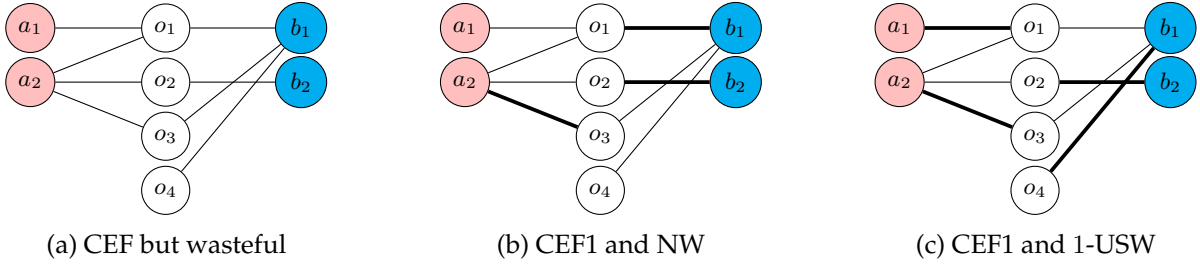

\end{example}

\subsection{Online Model} 
Let us now introduce our online model. In this model, the items in $M$ arrive one-by-one in an arbitrary order. We refer to the step in which item $o \in M$ arrives as step $o$. 

When item $o$ arrives, all agents reveal whether or not they like the item. In other words, the edges incident to item $o$ are revealed in graph $G$. At this point, an online algorithm must make an immediate and irrevocable decision to ``match'' the item to the agents in $N$, i.e., set the values of $(x_{a,o})_{a \in N}$. We consider both algorithms which set these values \emph{deterministically} and ones which set them in a \emph{randomized} fashion (but must fix them before the next item arrives). For randomized algorithms, we seek the desired guarantees in expectation.

For the algorithms we design in this paper, we prove that they achieve the desired guarantees (approximate \cef{}, \CEF{}, \CPROP{}, \CMMS{}, \USW{}, or non-wastefulness) at every step. However, a key property of our algorithms is that they do not need to know in advance the number of items that will arrive, which means that proving the desired guarantees at the end implies that that they hold at every step. In contrast, our upper bounds (impossibility results) will hold even if the desired guarantees are required to hold only at the end.

\begin{definition}
For $\alpha \in (0,1]$, a deterministic online algorithm for matching divisible or indivisible items is \cef{\alpha} (resp., \CEF{\alpha}, \CPROP{\alpha}, \CMMS{\alpha}, \USW{\alpha}, or \NW{}) if it produces an \cef{\alpha} (resp., \CEF{\alpha}, \CPROP{\alpha}, \CMMS{\alpha}, \USW{\alpha}, or \NW{}) matching when all items have arrived.
\end{definition}

\begin{definition}
For $\alpha \in (0,1]$, a randomized online algorithm for matching indivisible items is 
\begin{itemize}
\item  \cef{\alpha} if, when all items have arrived, it produces a matching $X$ such that for every pair of classes $i,j\in [k]$, $\mathbb{E}[V_{i}(X)] \geq \alpha \cdot \mathbb{E}[V_{i}^{*}(Y_{j}(X))]$;
\item  \CPROP{\alpha} if, when all items have arrived, it produces a matching $X$ such that for every class $i \in [k]$, $\mathbb{E}[V_{i}(X)] \geq \alpha \cdot \prop_i$; and
\item \USW{\alpha} if, when all items have arrived, it produces a matching $X$ such that $\E[\usw(X)] \ge \alpha \cdot \usw(X^*)$, where $\E[\usw(X)] = \sum_{a \in N} \sum_{o \in M : (a,o) \in E} \E[x_{a,o}]$ and $X^*$ is a matching with the highest utilitarian social welfare. 
\end{itemize}
\end{definition}

Because \CMMS{} and \CPROP{} place only a lower bound on the utility of every agent, there is no tension between them and non-wastefulness. Any algorithm achieving an approximation of these notions can be made non-wasteful without losing the said fairness approximation. We provide a formal proof in the appendix.

\begin{restatable}{proposition}{CMMSNonWastefull}\label{prop:MMS:NW}
For $\alpha \in (0,1]$, if there is a deterministic online algorithm satisfying \CMMS{\alpha} (resp., \CPROP{\alpha}), then there is a non-wasteful deterministic online algorithm satisfying \CMMS{\alpha} (resp., \CPROP{\alpha}). This holds for matching both divisible and indivisible items. 
\end{restatable}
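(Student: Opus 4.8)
The plan is to wrap any given algorithm in a non-wasteful one that dominates it agent-by-agent. Fix a deterministic online algorithm $A$ that is \CMMS{\alpha} (resp. \CPROP{\alpha}). The two facts I would exploit are: (i) the benchmarks $\mms_i$ and $\prop_i$ depend only on the instance and the class partition, not on the produced matching; and (ii) each class value $V_i(X) = \sum_{a \in N_i} V_a(X)$ is monotone non-decreasing when any agent's matched fraction along a liked edge is increased. Consequently, it suffices to construct a non-wasteful online algorithm $A'$ with $V_a(X^{A'}) \ge V_a(X^A)$ for every agent $a$, where $X^A$ and $X^{A'}$ are the respective final matchings; summing over $a \in N_i$ then gives $V_i(X^{A'}) \ge V_i(X^A) \ge \alpha \cdot \mms_i$ (resp. $\ge \alpha \cdot \prop_i$) for every class $i$. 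Since matching an item to an agent who does not like it contributes nothing to any class value, I may assume $A$ only ever matches along edges of $E$, so that the value $c_a$ accumulated by $a$ coincides with its total matched fraction.

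\emph{Construction.} $A'$ maintains a faithful internal simulation of $A$, feeding it the same item-and-edge sequence and letting it act as though it ran alone, so $A$'s decisions $x^A_{\cdot,o}$ are well defined. When item $o$ arrives, $A'$ first replicates $A$'s decision as far as its own residual capacities allow --- for each agent $a$ it tentatively assigns $\min\{x^A_{a,o},\, 1 - c^{A'}_a\}$, where $c^{A'}_a$ is the value already committed to $a$ by $A'$ --- and then makes the step non-wasteful by distributing the remaining fraction of $o$ among unsaturated agents who like $o$ until $o$ is fully assigned or no such agent remains. In the indivisible case the same template reads: match $o$ to $A$'s chosen agent if it is still free, otherwise to any free agent who likes $o$, and leave $o$ unmatched only if no free agent likes it. Because the replicated part assigns at most $x^A_{a,o}$ to each $a$, we have $\sum_a x^{A'}_{a,o} \le \sum_a x^A_{a,o} \le 1$ before filling, and filling respects the item capacity by construction, so $A'$ always outputs a valid (divisible or indivisible) matching; moreover it is non-wasteful at the end of every step, hence overall.

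The heart of the argument, which I would establish by induction on the steps, is the invariant $c^{A'}_a \ge c^A_a$ for every agent $a$. The only way $A'$ can fail to replicate $A$'s assignment to $a$ at step $o$ is that $a$ has strictly less residual capacity in $A'$ than $A$ wants to use, i.e. $1 - c^{A'}_a < x^A_{a,o}$; but in that event $A'$ saturates $a$ during this step, leaving $c^{A'}_a = 1 \ge c^A_a$ afterward, so the invariant survives. In every other case $A'$ adds at least as much to $a$ as $A$ does, and the extra non-wasteful filling only raises $A'$'s cumulative assignments. This is the step to get right, and it is the main obstacle: $A'$ may saturate some agents strictly earlier than $A$ and therefore cannot literally copy $A$'s trajectory; the resolution is precisely that early saturation is the one situation in which $A'$ already dominates $A$ on that agent, rendering the copying failure harmless.

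Combining the invariant at the terminal step with monotonicity and benchmark-independence yields $V_i(X^{A'}) \ge \alpha \cdot \mms_i$ (resp. $\alpha \cdot \prop_i$) for all $i$, while non-wastefulness holds by construction. The identical argument covers both the divisible and indivisible settings, with only the replication/filling step instantiated as above.
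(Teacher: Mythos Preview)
Your proposal is correct and takes essentially the same approach as the paper: simulate $A$ obliviously in the background, replicate its decisions subject to $A'$'s residual capacities, top up non-wastefully, and argue by induction that $A'$ weakly dominates $A$ agent-by-agent so that monotonicity of $V_i$ together with instance-dependence of $\mms_i$/$\prop_i$ transfers the guarantee. Your write-up is in fact more explicit than the paper's about the inductive step (handling the case where $A'$ has already saturated an agent that $A$ would still assign to) and about the WLOG restriction to edge-matches; the paper simply asserts the superset/domination claim and leaves the induction to the reader.
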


\section{Deterministic Algorithms for Indivisible Items}
\label{sec:indivisible}

We start by focusing on deterministic algorithms for matching indivisible items. We study possible approximations of two fairness concepts, \CEF{} and \CMMS{}, along with efficiency guarantees in terms of non-wastefulness and the utilitarian social welfare.

When matching indivisible items, \CEF{} may seem trivial to achieve: only match an item to some agent in some class if this preserves \CEF{}, and discard the item otherwise. However, this algorithm may `waste' too many items and lose significant efficiency.\footnote{In fact, discarding all items---an empty matching---is vacuously class envy-free.}

\cref{exm:CEF1vNW} illustrated that \CEF{} and non-wastefulness are incompatible in the online setting.\footnote{In \cref{sec:pessimistic}, we show that this incompatibility holds even after weakening the \CEF{} requirement to account for `pessimistic' valuations, i.e, when each class evaluates the items matched to another class through a \textit{minimum-cardinality maximal matching}.}
In this light, for arbitrary classes, it is natural to ask what approximation of \CEF{} can be achieved subject to non-wastefulness. 

\subsection{Algorithm \algdetindiv}

One way to achieve approximate \CEF{} is to ensure a balanced treatment of all classes by providing them approximately equal `opportunity' for receiving an item. This approach is inspired by the well-studied \textit{Round-Robin} algorithm in fair division~\citep{caragiannis2016unreasonable} and its widely-adopted cousin, \textit{Draft}, that is used in sports for selecting players~\citep{brams1979prisoners,brams2000win} or assigning courses to college students~\citep{budish2012multi}.

However, running such algorithms na\"{\i}vely in our online setting, where not all items are available upfront, can be problematic: if we do a round-robin over classes, a class can be disadvantaged if the item arriving in its turn is not liked by any unmatched agent in the class. Further, non-wastefulness requires that any arriving item be matched as long as there is an unsaturated agent who likes it, even if this agent does not belong to the class whose turn it is. Keeping these observations in mind, we design \algdetindiv (\Cref{alg:indivisible}), which provides equal treatment to the different classes while achieving non-wastefulness.

\begin{algorithm}[t]
\caption{\algdetindiv}
\label{alg:indivisible}
\DontPrintSemicolon
Fix a priority ordering over classes, $\pi = (\pi_1,\ldots, \pi_{k})$ \;
\When{item $o\in M$ arrives}{
    \For{$i = 1$ to $k$}{
        Let $N_{\pi_{i}, o}$ be the set of unmatched agents $a \in N_{\pi_{i}}$ such that $(a,o) \in E$ \;
        \If{$N_{\pi_{i}, o} \neq \emptyset$}{
            Arbitrarily match $o$ to an agent in $N_{\pi_{i}, o}$ \;
            $\pi \leftarrow (\pi_{1},\ldots,\pi_{i-1}, \pi_{i+1}, \ldots, \pi_{k}, \pi_{i})$ \;
            \Break\;
            }
        }
}
\end{algorithm}

\paragraph{Algorithm description.} Fix an arbitrary priority ordering $\pi=(\pi_{1}, \pi_{2}, \ldots, \pi_{k})$ over the $k$ classes, where $\pi_{1}$ is the class with the highest priority.
Upon arrival of each item, pick the first class $N_{\pi_{i}}$ in the priority ordering that contains an unmatched agent who likes the item. Match the item to any unmatched agent---there may be several such agents---in $N_{\pi_{i}}$ who likes the item. Update the priority ordering $\pi$ by moving class $\pi_i$ to the end.

The following theorem establishes approximate fairness and efficiency guarantees of \algdetindiv; later, in \Cref{thm:CEF1:impossibility}, we prove that these guarantees are tight.

\begin{theorem}\label{thm:CEF1}
For deterministic matching of indivisible items, \algdetindiv (\Cref{alg:indivisible}) satisfies non-wastefulness, \CEF{\half}, \CMMS{\half}, and \USW{\half}.
\end{theorem}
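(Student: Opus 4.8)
The plan is to verify the four properties of \algdetindiv separately, leveraging the key structural invariant maintained by the adaptive priority queue. The central observation I would establish first is the following \emph{balancedness invariant}: at every step, the number of items assigned to any two classes differs by at most one, and moreover the priority ordering reflects how recently each class received an item (a class that just received an item is sent to the back). More precisely, I would argue that if we track the sizes $|Y_i|$ of the bundles, then whenever a class $i$ receives an item while another class $j$ does not, it must be that at that moment class $j$ had no unmatched agent liking the item, or $j$ was behind $i$ in priority having received an item more recently. Formalizing this ``no class falls two behind'' property is what drives the $\half$ bounds.

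\textbf{Non-wastefulness and USW.} Non-wastefulness is the easiest: the inner \textbf{for} loop scans all classes in priority order and matches $o$ to an unmatched liking agent whenever one exists in any class, breaking only after a successful match. Hence an item goes unmatched only if \emph{every} agent liking it is already saturated, which is exactly the definition of \NW{}. Then \USW{\half} follows immediately from \Cref{prop:nw-usw}, since the produced matching is non-wasteful.

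\textbf{\CEF{\half}.} This is the heart of the argument. Fix classes $i, j$ with $Y_j \neq \emptyset$. I would compare $V_i(X) = |Y_i|$ (each matched agent contributes value one, and matched agents in $N_i$ are distinct since each agent is matched at most once) against $V_i^*(Y_j \setminus \set{o})$ for a well-chosen removed item $o$. The strategy is to use the priority dynamics to show $|Y_i| \geq |Y_j| - 1$: every time class $j$ receives an item, it is moved behind $i$ in the queue, so before $i$ can fall further behind, $i$ gets its turn; a careful accounting shows the gap never exceeds one. Since $V_i^*(Y_j \setminus \set{o}) \leq |Y_j| - 1$ trivially (the optimistic value of a set is at most its cardinality), and $V_i(X) = |Y_i| \geq |Y_j| - 1 \geq V_i^*(Y_j \setminus \set{o})$, we would in fact get something close to exact \CEF{}---so I expect the true $\half$ loss to come from a subtler place: when class $i$'s matched agents are ``wasted'' on low-value matchings, i.e. $V_i(X)$ counts saturated agents but $V_i^*(Y_j \setminus \set{o})$ can exploit a better matching structure on $Y_j$. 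The honest bound is $V_i^*(Y_j \setminus \set{o}) \leq V_i(X) + |Y_j \setminus \set{o}| - (\text{items } i \text{ could have grabbed})$; I would bound $V_i^*(Y_j) \leq V_i(X) + 1$ by showing that the optimistic matching of $Y_j$ into $N_i$ can reuse at most the agents $i$ already saturated plus essentially one extra, because any item in $Y_j$ liked by an \emph{unmatched} agent of $N_i$ would have been claimed by $i$ first under non-wastefulness and priority. Removing the single item $o$ that witnesses the priority ``overtaking'' then yields $V_i(X) \geq \half \, V_i^*(Y_j \setminus \set{o})$.

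\textbf{\CMMS{\half}.} For the maximin share bound I would relate $V_i(X)$ to $\mms_i$. Recall $\mms_i = \max_{X' \in \mathcal{I}} \min_{\ell} V_i^*(Y_\ell(X'))$. The clean approach is to show that $V_i(X) \geq \half \, V_i^*(M)/1$ is too weak; instead I would use that $\mms_i \leq V_i^*(M)/k$ would hold under additivity but not here, so I would argue directly. The cleanest route: let $X'$ be the indivisible matching witnessing $\mms_i$, so every class-bundle $Y_\ell(X')$ has $V_i^*(Y_\ell(X')) \geq \mms_i$; in particular $V_i^*(M) \geq \sum_\ell V_i^*(Y_\ell(X'))$ is \emph{false} for subadditive valuations, so I must instead observe $\mms_i \leq V_i^*(M)$ and bound $V_i(X)$ below. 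Since \algdetindiv saturates agents of $N_i$ fairly, I would show $V_i(X) = |Y_i| \geq \half \min(|N_i|, \text{demand})$ and that $\mms_i \leq \min(|N_i|, \ldots)$, matching up to the factor $\half$. I expect the $\CMMS$ part to require a separate combinatorial argument tying $|Y_i|$ to the largest matchable demand of $N_i$.

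\textbf{Main obstacle.} The hard part will be the \CEF{\half} analysis, specifically controlling the optimistic valuation $V_i^*(Y_j \setminus \set{o})$: unlike additive settings, $i$'s hypothetical rematching of $j$'s bundle can be much larger than $|Y_j|$'s naive contribution, and I must pin down exactly which single item $o$ to delete and argue the priority-queue invariant survives the reuse of $N_i$'s agents in the counterfactual matching.
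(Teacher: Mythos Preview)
Your treatment of non-wastefulness and \USW{\half} is correct and matches the paper.

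The \CEF{\half} argument, however, has a genuine gap. Your intermediate claim $V_i^*(Y_j) \le V_i(X) + 1$ is false, and so is the reasoning behind it (``any item in $Y_j$ liked by an unmatched agent of $N_i$ would have been claimed by $i$ first''). The priority mechanism does \emph{not} prevent $j$ from receiving an item liked by an unmatched agent of $i$; it only prevents this from happening twice in a row without $i$ receiving something in between. Concretely, take \Cref{exm:CEF1vNW}: there $V_1(X)=1$ while $V_1^*(Y_2)=3$, so $V_i^*(Y_j) \le V_i(X)+1$ fails. Your decomposition into matched and unmatched agents of $N_i$ is the right first step---the $t$ matched agents contribute at most $t$ to $V_i^*(Y_j)$---but the contribution of the unmatched agents is bounded by the \emph{number} of items in $Y_j$ that some unmatched agent of $i$ likes, and the priority argument only shows this number is at most $t+1$, not at most $1$. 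Combining gives $V_i^*(Y_j\setminus\{o\}) \le t + t = 2t$ after removing one such item, which is exactly the $\half$ bound; your proposed bound would have yielded exact \CEF{}, which \Cref{thm:CEF1:impossibility} rules out.

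For \CMMS{\half}, your sketch is too vague to succeed. The inequality $\mms_i \le V_i^*(M)/k$ is false for subadditive valuations, as you note, and the alternative ``$V_i(X) \ge \tfrac12\min(|N_i|,\text{demand})$'' is not what the algorithm guarantees. The paper's argument reuses the same matched/unmatched decomposition: if $(S_1,\dots,S_k)$ is a maximin partition and $S_j^*$ is the set of items in $S_j$ liked by some unmatched agent of $i$, then $V_i^*(S_j) \le t + |S_j^*|$. The key step is a global count: the items liked by unmatched agents of $i$ all lie in $Y_i \cup \bigcup_{j\ne i} Y_j^*$, which has size at most $t + (k-1)(t+1)$ by the priority bound above, so by pigeonhole some $|S_h^*| \le t$, giving $V_i^*(S_h) \le 2t$ and hence $\mms_i \le 2t$. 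Your proposal does not contain this counting argument.
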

\begin{proof} Let $X$ be the matching returned by the algorithm at the end.

\paragraph{\NW{}} Non-wastefulness of $X$ follows immediately from the description of the algorithm: at each step, the arriving item is matched to an agent who likes it whenever such an agent exists. 

\paragraph{\USW{}} Because $X$ is non-wasteful, due to \Cref{prop:nw-usw} it also satisfies \USW{\half}. 

\medskip\noindent Now, we turn our attention to the fairness guarantees. Recall that for each $i \in [k]$, $Y_i$ denotes the set of items matched to agents in class $j$. Fix any class $i$. Let $t = |Y_i|$ denote the number of items matched to the agents in class $i$ under $X$. Due to non-wastefulness, we have $V_i(X)=t$. 

\paragraph{\CEF{\half}.} Consider any class $j \in [k] \setminus \{i\}$. Let $Y^*_j \subseteq Y_j$ be the set of items matched to class $j$ that are liked by at least one unmatched agent in class $i$. The claim immediately holds when $Y^*_j =\emptyset$: in this case, the optimistic value of class $i$ for $Y_j$ is $V^*_i(Y_j) \le t = V_i(X)$, implying that $X$ satisfies \cef{} for $i$. Thus, we assume that at least one item in $Y_j$ is liked by at least one unmatched agent of class $i$. 

By construction of the algorithm, we have $|Y^*_j| \leq t+1$. This is because every time class $j$ receives an item in $Y^*_j$ (that is liked by an agent in class $i$ who remains unmatched till the end, and, therefore, is unmatched at the time of the item's arrival), class $j$ must have a higher priority than class $i$. Hence, the algorithm must match an item to class $i$ before it can match another item in $Y^*_j$ to class $j$. Thus, $|Y^*_j| \le 1+|Y_i| = t+1$.

Fix an arbitrary item $o \in Y^*_j \subseteq Y_j$. We claim that $V^*_i(Y_j \setminus \set{o}) \le 2t$, which establishes the \CEF{\half} claim. Note that the $t$ matched agents in class $i$ can derive a maximum total utility of $t$ from these items. Further, the total utility that the unmatched agents in class $i$ can derive from these items is upper bounded by $|Y^*_j \setminus \set{o}| \le t$. Hence, $V^*_i(Y_j \setminus \set{o}) \le 2t$.

\paragraph{\CMMS{\half}.} Assume for contradiction that $t=V_i(X) < (\half) \cdot \mms_i$. Because $\mms_i$ is an integer, this implies $2t + 1 \leq \mms_i$. Let $(S_1,S_2,\ldots,S_k)$ be a maximin partition of the items for class $i$ such that $V^*_i(S_j) \geq \mms_i$ for every $j \in [k]$. By our assumption, we have $V^*_i(S_j) \geq 2t+1$ for every $j \in [k]$. For each $j \in [k]$, we let $S^*_j$ denote the set of items in $S_j$ that are liked by at least one unmatched agent in class $i$. Note that $V^*_i(S_j) \le t+|S^*_j|$: the $t$ matched agents in class $i$ can derive total utility at most $t$, and the unmatched agents can derive total utility at most $|S^*_j|$.

Recall that $|Y_i| = t$ and we have already established $|Y^*_j| \leq t+1$ for every class $j \in [k] \setminus \{i\}$. Further, by non-wastefulness, none of the unmatched agents of class $i$ likes any item in $O \setminus \bigcup_{h \in [k]} Y_h$. Thus, we have $|\bigcup_{j \in [k]} S^*_j| \leq |Y_i \cup (\bigcup_{j \in [k]\setminus \{i\}} Y^*_j)| \leq t+ (k-1)(t+1)$, meaning that there exists some $h \in [k]$ such that $|S^*_h| \leq t$. Thus, we have $V^*_i(S_h) \leq 2t < 2t+1$,
a contradiction. 
\end{proof}

Before we turn to proving these guarantees to be the best possible in our online setting, we remark that in the offline setting, it is known that (exact) \CEF{} and \NW{} can be achieved simultaneously~\cite{benabbou2019fairness}. However, whether they can be achieved together with \CMMS{\alpha}, for any $\alpha > 0$, is an interesting open question. 

\subsection{Impossibility Results}

In this section, we show that the each of the fairness and efficiency guarantees achieved by \algdetindiv (\Cref{thm:CEF1}) is tight; no deterministic online algorithm for matching indivisible items can achieve a better approximation. Note that our \CEF{} upper bound is subject to non-wastefulness because an algorithm can trivially achieve \CEF{} on its own by throwing away every item. 

The constructions are based on creating instances in which a subset of agents in one class get saturated early on, rendering the class envious of another class at the end since all the remaining items can only be matched to the agents in that other class.

\begin{theorem}\label{thm:CEF1:impossibility}
No deterministic online algorithm for matching indivisible items can achieve any of the following guarantees:
\begin{itemize}
    \item \CEF{\alpha} for any $\alpha > \half$ and non-wastefulness,
    \item \CMMS{\alpha} for any $\alpha > \half$, 
    \item \USW{\alpha} for any $\alpha > \half$.
\end{itemize}
\end{theorem}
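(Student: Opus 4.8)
The plan is to defeat an arbitrary deterministic algorithm with an \emph{adaptive} adversary that, after observing each irrevocable decision, continues the instance so that the target guarantee fails with ratio at most $\half$; since this works for every fixed $\alpha>\half$, all three bounds follow. Throughout I use two classes (the \USW{} construction in fact ignores the class structure). For the \USW{} bound I reuse the classical deterministic online-matching argument: take agents $a_1,a_2$, present $o_1$ liked by both, and observe which one the algorithm saturates, say $a_i$; then present $o_2$ liked only by $a_i$. Since $a_i$ is already saturated, $o_2$ cannot be matched and the produced matching has size at most $1$, whereas $o_1\to a_{3-i}$, $o_2\to a_i$ has size $2$ (if the algorithm instead leaves $o_1$ unmatched, present $o_2$ liked only by $a_1$ and the same bound holds). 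Hence no deterministic algorithm is \USW{\alpha} for $\alpha>\half$.

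\emph{The \CEF{}$\,+\,\NW{}$ bound.} I turn the adaptive version of \cref{exm:CEF1vNW} into a lower bound, with $N_1=\{a_1,a_2,a_3\}$, $N_2=\{b_1,b_2,b_3\}$ and symmetric paired items $o_i$ liked by $a_i$ and $b_i$. Item $o_1$ must be matched by \NW{}; relabel the classes so that it goes to $a_1\in N_1$, saturating $a_1$. The adversary then presents $o_2$ (liked by $a_2,b_2$) and branches. If the algorithm keeps $o_2$ in $N_1$, it stops immediately: the starved class $N_2$ has $V_2(X)=0$ while $V_2^*(Y_1\setminus\{o\})\ge 1$ for every $o\in Y_1$, so \CEF{\alpha} fails for all $\alpha$. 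Otherwise $o_2\to b_2$, the adversary presents $o_3$ (liked by $a_3,b_3$), and whichever class the algorithm assigns it to becomes the \emph{leading} class, one item ahead of the \emph{lagging} class. The adversary now introduces a recycled item liked by the already-saturated agent of the lagging class and a fresh agent of the leading class; \NW{} forces it into the leading class, adding to that class's bundle an item coveted by a distinct agent of the lagging class. In the terminal instance the leading class (say $N_2$) holds a three-item bundle $Y_2$ liked by the three distinct agents $a_1,a_2,a_3$, while $V_1(X)=1$, so $V_1^*(Y_2\setminus\{o\})=2$ for every $o$ and thus $V_1(X)=1\le\half\,V_1^*(Y_2\setminus\{o\})$; as $\half<\alpha$, the guarantee fails. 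The symmetric branch traps $N_1$ in the mirror-image way.

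\emph{The \CMMS{} bound.} By \cref{prop:MMS:NW} it suffices to defeat \emph{non-wasteful} algorithms, so I may run exactly the construction above, in which every decision is either forced by \NW{} or resolved by class symmetry. In each terminal branch I compute, over the final item set, the maximin share of the trapped class and check that it is at least twice the value received. In the starvation branch, $V_2(X)=0$ while the items $\{o_1,o_2\}$ give $\mms_2=1$. In the balanced branch, $V_1(X)=1$ while the partition $\{o_1,o_2\},\{o_3,o_4\}$ witnesses $\mms_1=2$ (and symmetrically $\mms_2=2$ when $N_2$ is the lagging class). In every case $V_i(X)\le\half\,\mms_i$, so no deterministic algorithm is \CMMS{\alpha} for $\alpha>\half$.

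\emph{Main obstacle.} The instances themselves are small; the real work is the adaptive case analysis. Because each ``shared'' paired item may be routed to either class, I must verify that \emph{every} routing leads to a terminal instance violating the target notion with ratio at most $\half$, and---for the \CEF{} bound---that \NW{} genuinely forces each recycled item into the leading class, so the algorithm cannot avoid overloading it. Confirming the optimistic-value computations $V_i^*(Y_j\setminus\{o\})$ and the maximin-share computations $\mms_i$ in each branch, rather than designing the instances, is where the bookkeeping concentrates.
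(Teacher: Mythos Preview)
Your proof is correct and follows essentially the same approach as the paper: the same two-class paired-item instance from \Cref{exm:CEF1vNW} for the \CEF{} and \CMMS{} bounds (with the reduction via \Cref{prop:MMS:NW}), and the standard two-agent instance for \USW{}. Your treatment is in fact slightly more careful than the paper's, since you make the adaptive branching explicit (the paper leans on the ``without loss of generality'' narrative of \Cref{exm:CEF1vNW}, which tacitly assumes the algorithm is trying to maintain \CEF{} when routing $o_2$), whereas you separately dispose of the early ``starvation'' branch; one small slip is that your final sentence should say the symmetric branch traps $N_2$, not $N_1$.
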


\begin{proof}
We argue each impossibility result separately.

\paragraph{\CEF{} and \NW{}} Consider \Cref{exm:CEF1vNW} in the introduction. In that example, we already argued that any deterministic online algorithm satisfying non-wastefulness ends up matching (without loss of generality) $Y_2 = \set{o_2,o_3,o_4}$ to class $2$ and $Y_1 = \set{o_1}$ to class $1$. One can check that $V^*_1(Y_2 \setminus \set{o}) = 2$ for any $o \in Y_2$, whereas $V_1(X) = 1$, implying that the algorithm cannot achieve \CEF{\alpha} for any $\alpha > \half$.

\paragraph{\CMMS{}} We will prove that no deterministic online algorithm \emph{satisfying non-wastefulness} can achieve \CMMS{\alpha} for any $\alpha > \half$. \Cref{prop:MMS:NW} implies that no deterministic algorithm, regardless of whether it satisfies non-wastefulness, can guarantee \CMMS{\alpha} for any $\alpha > \half$. 

Since we have assumed non-wastefulness, we can repeat the construction used above for the \CEF{} upper bound. Consider the same example again, and consider the partition the items into $(\tilde{Y}_1 = \set{o_1,o_2}, \tilde{Y}_2 = \set{o_3,o_4})$. Note that $V^*_1(\tilde{Y}_1) = V^*_1(\tilde{Y}_2) = 2$, implying that the maximin share of class $1$ is $\mms_1 \ge 2$. Since the value derived by class $1$ is $V_1(X) = 1$, we see that the algorithm cannot achieve \CMMS{\alpha} for any $\alpha > \half$.

\paragraph{\USW{}} Note that the \USW{} guarantee does not depend on the class structure; hence, the well-known upper bound of $\half$ on the approximation of a maximum matching by any deterministic algorithm carries over to our model, and implies the desired \USW{\half} upper bound. For completeness, consider the following simple instance. 

There are two items, $o_1$ and $o_2$, arriving in the increasing order of their indices. There is a single class containing two agents. Item $o_1$ is liked by both agents. The algorithm matches it to one of the two agents. Item $o_2$ then arrives, and is liked only by the agent who did not receive item $o_1$. The optimal utilitarian social welfare is $2$, but that of the algorithm is only $1$. 
\end{proof}

Following \Cref{thm:CEF1:impossibility}, a natural question is whether there is any way to circumvent this impossibility result. We show that two such approaches do not work, demonstrating robustness of \Cref{thm:CEF1:impossibility}.

\begin{remark}[Reshuffling items within each class cannot help.]
One idea is to only require the online algorithm to match each item to a class, and allow every class to optimally distribute the items matched to it among its members at the end. This effectively increases the utility of class $i$ from $V_i(X)$ to $V^*_i(Y_i)$. However, in \Cref{exm:CEF1vNW} used for the \CEF{} and \CMMS{} upper bounds in the proof above, the matching produced already assigns items optimally within each class (i.e., satisfies $V_i(X) = V^*_i(Y_i)$ for each class $i$). Hence, reshuffling items at the end cannot improve the value any further. 
This shows that we must use randomization when deciding which class should receive an item in order to achieve a better approximation; this is precisely what we achieve in \Cref{sec:randomized}.
\end{remark}

\begin{remark}\label{remark:allocated:MMS}
Another natural direction is to weaken the requirements in \Cref{thm:CEF1:impossibility}. In our online setting, there is a weakening of our \CMMS{\alpha} guarantee that also makes sense. Instead of computing the MMS values by partitioning the set of \emph{all} items, we can first observe the matching $X$ produced by an algorithm and then compute the MMS values by having each class partition only the set of items \emph{allocated} under $X$. This produces smaller (or equal) values, making this \CMMS{} with respect to \emph{allocated items} a weaker requirement than our \CMMS{} with respect to \emph{all items}. 

\algdetindiv achieves a $\half$-approximation of the stronger requirement. In contrast, the proof of \Cref{thm:CEF1:impossibility} shows that no non-wasteful\footnote{Seeking the weaker requirement makes sense only with non-wastefulness since the empty matching vacuously satisfies it.} algorithm can achieve $(\half+\epsilon)$-approximation of even the weaker requirement, for any $\epsilon > 0$, because all items are allocated in our construction. 
\end{remark}

\section{Deterministic Algorithms for Divisible Items}
\label{sec:divisible}

We now turn our attention to deterministic online matching of \emph{divisible} items. First, we design an algorithm that simultaneously achieves non-wastefulness, \cef{(\e)}, \CPROP{(\e)}, and \USW{\half}. Later, we prove upper bounds on the approximation ratio of each guarantee that hold for any algorithm. 

\subsection{Algorithm \algdetdiv}

We propose an algorithm, \algdetdiv (presented as \Cref{alg:divisible}), that divides items equally at the class level and performs water-filling to further divide the items assigned to each class between the agents in that class. Recall that our model has a capacity constraint: $\sum_{o \in M} x_{a,o} \le 1$ for each agent $a$. Agent $a$ is saturated if $\sum_{o \in M} x_{a,o} = 1$, and unsaturated otherwise. 

When item $o$ arrives, \algdetdiv continuously splits the item equally among classes with at least one unsaturated agent who likes the item.\footnote{We do not yet need to know how the fraction of item $o$ assigned to a class is divided between its members; we can simply keep track of the total remaining capacity of the agents in the class who like the item.} At the end of this process, each class either receives the same fraction $\beta_o$ of the item, or has all of its agents who like item $o$ saturated. This computation is performed in \Cref{alg:divisible:classphase} of \Cref{alg:divisible}. Then, to divide fraction of item $o$ assigned to each class $i$ within its members, we conduct water-filling among the members who like item $o$, which continuously prioritizes agents with the lowest utility. At the end of this process, each member who likes item $o$ either receives the same final utility $\gamma_{i,o}$ or is saturated. This computation is performed in \Cref{alg:divisible:individual-phase} of \Cref{alg:divisible}. 

\begin{algorithm}[htb]
\caption{\algdetdiv}
\label{alg:divisible}
\DontPrintSemicolon
Initialize $X = ( x_{a,o} )_{a \in N, o \in M}$ so that $x_{a,o}=0$ for every agent $a$ and item $o$\;
Initialize $Y = (y_{i,o})_{i \in [k], o \in M}$ so that $y_{i,o}=0$ for every class $i$ and item $o$\;
\When{item $o\in M$ arrives}{
\textbf{/*class-phase*/}\;
Define the demand of each class $i \in [k]$ as $d_{i,o} = \sum_{a \in N_{i,o}} (1 - \sum_{o' \in M} x_{a,o'})$\;
Find the largest $\beta_o \le 1$ satisfying $\sum_{i \in [k]} \min \{\beta_o, d_{i,o} \} \le 1$\;
Set $y_{i,o} = \min \{ \beta_o, d_{i,o}\}$ for each $i \in [k]$\;\label{alg:divisible:classphase}
\For{$i = 1$ to $k$}{
\textbf{/*individual-phase*/}\;
    Let $N_{i,o}$ denote the set of neighbours of item $o$ in class $i$, i.e., $N_{i,o}=\{a \in N_i: (a,o) \in E \}$ \;
    Find the largest $\gamma_{i,o} \le 1$ satisfying $\sum_{j \in N_{i,o}} \max\set{\gamma_{i,o} - \sum_{o' \in M} x_{a,o'},0} \le y_{i,o}$\;
    Set $x_{a,o} = \max\set{\gamma_{i,o} - \sum_{o' \in M} x_{a,o'},0}$ for all $a \in N_{i,o}$\;\label{alg:divisible:individual-phase}
    }
}
\end{algorithm}

\begin{theorem}\label{thm:divisible}
For deterministic matching of divisible items, \algdetdiv (\Cref{alg:divisible}) satisfies non-wastefulness, \cef{(\e)}, \CPROP{(\e)}, and \USW{\half}.
\end{theorem}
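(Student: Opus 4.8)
The four guarantees fall into two easy efficiency claims and two fairness claims that share a single water-filling primal--dual argument.

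\textbf{Non-wastefulness and \USW{\half}.} Non-wastefulness is immediate from the two-phase design: I would argue that if some item $o$ is not fully assigned, then the largest feasible class-phase level is $\beta_o=1$, so every class receives its full demand $y_{i,o}=\min\{\beta_o,d_{i,o}\}=d_{i,o}$; the individual phase then pours exactly the total residual capacity $d_{i,o}$ of the neighbours $N_{i,o}$ into them, saturating every agent of $N_{i,o}$. Hence no unsaturated agent can like an under-filled item, which is \NW. Then \USW{\half} follows directly from \Cref{prop:nw-usw}.

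\textbf{A unified fairness lemma.} Both fairness statements reduce to showing $V_i^*(\mathbf b)\le\tfrac{e}{e-1}\,V_i(X)$ for a suitable availability vector $\mathbf b=(b_o)_{o\in M}$ with $b_o\le\beta_o$. First note $V_i(X)=\sum_o y_{i,o}=\sum_{a\in N_i}\ell_a$, where $\ell_a=V_a(X)$ is the final water level of agent $a$. The plan is to exhibit a feasible solution to the dual of the matching LP defining $V_i^*(\mathbf b)$, namely $\min\{\sum_o b_o p_o+\sum_{a\in N_i} q_a : p_o+q_a\ge 1 \ \forall (a,o)\in E,\ p,q\ge0\}$, built online from the water levels. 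I would set $q_a=\tfrac{e^{\ell_a}-1}{e-1}$ (increasing, with $q(0)=0$ and $q(1)=1$) and, for each item $o$, $p_o=1-q(\tau_o)$, where $\tau_o=\min_{a\in N_{i,o}}\ell_a(o)$ is the smallest level among the neighbours of $o$ in $N_i$ immediately after $o$ is processed (and $p_o=0$ when $N_{i,o}=\emptyset$). Feasibility holds because levels never decrease, so $q_a\ge q(\tau_o)$ for every neighbour $a\in N_{i,o}$; non-negativity of $p_o$ follows from $q(\tau_o)\le1$. By weak duality it then suffices to maintain, item by item starting from $0$, the invariant $\Delta D\le\tfrac{e}{e-1}\,\Delta P$, where $P=\sum_{a\in N_i}\ell_a$ and $D$ is the dual value.

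\textbf{The invariant (the crux).} Processing item $o$ increases $P$ by exactly $y_{i,o}$. Since the individual phase raises levels only up to the common level $\gamma_{i,o}$, the agent-dual growth is $\Delta(\sum_a q_a)\le\tfrac{e^{\gamma_{i,o}}}{e-1}\,y_{i,o}$, while $\tau_o\ge\gamma_{i,o}$ gives $p_o\le 1-q(\gamma_{i,o})=\tfrac{e-e^{\gamma_{i,o}}}{e-1}$. Plugging these in, the target inequality $b_o p_o+\Delta(\sum_a q_a)\le\tfrac{e}{e-1}y_{i,o}$ collapses, after cancelling the factor $e-e^{\gamma_{i,o}}\ge0$, to the single condition $b_o\le y_{i,o}$. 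The main obstacle is the complementary case $b_o>y_{i,o}$, and the key observation is that, because $b_o\le\beta_o$, this case forces $y_{i,o}=\min\{\beta_o,d_{i,o}\}=d_{i,o}$; the individual phase then pours the \emph{entire} residual capacity of $N_{i,o}$ and saturates every neighbour of $o$ in $N_i$ right after $o$, so $\tau_o=1$ and $p_o=0$. The invariant then holds unconditionally, since $\Delta D=\Delta(\sum_a q_a)\le\tfrac{e}{e-1}y_{i,o}=\tfrac{e}{e-1}\Delta P$. Hence in both cases $\Delta D\le\tfrac{e}{e-1}\Delta P$, and summing yields $V_i^*(\mathbf b)\le D\le\tfrac{e}{e-1}V_i(X)$.

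\textbf{Applying the lemma.} For \cef{}, take $\mathbf b=Y_j(X)$; since $b_o=y_{j,o}=\min\{\beta_o,d_{j,o}\}\le\beta_o$, the hypothesis holds and the lemma gives $V_i(X)\ge(1-1/e)\,V_i^*(Y_j(X))$ for every pair $i,j$, i.e.\ \cef{(\e)}. For \CPROP{}, I would first reduce the benchmark: using that $V_i^*$ is concave and monotone in the availability vector, averaging the $k$ bundles of a matching attaining $\prop_i$ shows $\prop_i\le V_i^*(\tfrac1k\mathbf 1_M)$. Because $\sum_{i'}\min\{1/k,d_{i',o}\}\le1$, the class phase guarantees $\beta_o\ge1/k$, so $b_o=1/k\le\beta_o$ and the lemma applies with $\mathbf b=\tfrac1k\mathbf 1_M$, giving $V_i(X)\ge(1-1/e)\,V_i^*(\tfrac1k\mathbf1_M)\ge(1-1/e)\,\prop_i$.
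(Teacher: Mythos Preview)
Your proof is correct and takes a genuinely different route from the paper. The paper argues directly on level sets: for each threshold $\theta\in(0,1]$ it bounds $V_i^*(Y_j)$ (and, for \CPROP{}, $\sum_j V_i^*(\tilde Y_j)$) by $\int_0^\theta f(z)\,dz + f(\theta)$, where $f(\theta)$ counts agents in class $i$ with water level at least $\theta$, and then integrates this family of inequalities against the weight $e^{\theta-1}$. You instead construct an explicit feasible dual solution to the LP defining $V_i^*(\mathbf b)$, with agent duals $q_a=(e^{\ell_a}-1)/(e-1)$ and item duals $p_o=1-q(\tau_o)$, and verify the per-item invariant $\Delta D\le\tfrac{e}{e-1}\Delta P$ via the case split $b_o\le y_{i,o}$ versus $b_o>y_{i,o}$ (the latter forcing $\gamma_{i,o}=1$ because $b_o\le\beta_o$). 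The two arguments are essentially dual to one another---your $q'(\ell)=e^\ell/(e-1)$ plays the same role as the paper's weight $e^{\theta-1}$---but your packaging buys a cleaner unification: a single lemma ``$b_o\le\beta_o\Rightarrow V_i^*(\mathbf b)\le\tfrac{e}{e-1}V_i(X)$'' immediately gives \cef{(\e)} with $\mathbf b=Y_j$, and gives \CPROP{(\e)} after the elegant reduction $\prop_i\le V_i^*(\tfrac1k\mathbf 1_M)$ via concavity and monotonicity of $V_i^*$. The paper's approach, by contrast, avoids appealing to LP duality or concavity of $V_i^*$, so it is slightly more self-contained at the cost of handling the two fairness notions separately.
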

\begin{proof}
    We prove that \algdetdiv satisfies each of the desirable properties.
    
    \paragraph{\NW{}} Non-wastefulness follows by the algorithm's definition.

    \paragraph{\USW{\half}} This is implied by non-wastefulness (\Cref{prop:nw-usw}).

    \paragraph{\cef{(\e)}}
    Consider two arbitrary classes $i$ and $j$. We want to prove that class $i$'s value for its matching is at least $1-\frac{1}{e}$ times its optimistic value for class $j$'s matching, i.e., $V_i(X) \ge (\e) \cdot V_i^*(Y_j)$. 
    %

    For $\theta \in [0,1]$, let $f(\theta)$ denote the number of agents in class $i$ who have value (``water level'') at least $\theta$ under $X$. Let $N_i(\theta)$ be the set of these $f(\theta)$ agents and $\bar{N}_i(\theta) = N_i \setminus N_i(\theta)$. One can check that for any $\theta \in [0,1]$, $\int_0^{\theta} f(z) \dif z = \sum_{a \in N_i} \min(\theta,\sum_{o \in M} x_{a,o})$. 

    Let us now rewrite both $V_i(X)$ and $V^*_i(Y_j)$ in terms of $f(y)$. Plugging in $\theta = 1$ above, we see that the total value of the agents in class $i$ is given by 
    \[
        V_i(X) = \int_0^1 f(z) \dif z.
    \]

    Next, fix an arbitrary $\theta \in (0,1]$. In order to upper bound $V^*_i(Y_j)$, we consider the value derived from $Y_j$ by the agents in $N_i(\theta)$ and those in $\bar{N}_i(\theta)$. 
    
    Since agents in $\bar{N}_i(\theta)$ remain unsaturated till the end, for every item $o$ liked by any such agent, the fraction $y_{i,o}$ of the item given to class $i$ must be at least as much as the fraction $y_{j,o}$ of it given to class $j$. Further, the portion given to class $i$ must be assigned to agents who, at the time of the assignment, had value less than $\theta$. Hence, the total fraction of items given to class $j$ that are liked by at least one agent in $\bar{N}_i(\theta)$, which is an upper bound on the contribution of the agents in $\bar{N}_i(\theta)$ to $V^*_i(Y_j)$, is at most $\int_0^\theta f(z) \dif z$. Note that the $f(\theta)$ agents in $N_i(\theta)$ contribute at most $1$ each to $V^*_i(Y_j)$. Combining these observations, the optimistic value of class $i$ for the items assigned to class $j$ satisfies
    \[
        V_i^*(Y_j) \le \int_0^\theta f(z) \dif z + f(\theta),
        \qquad
        \forall 0 < \theta \le 1.
    \]

    Multiplying the above inequality by $e^{\theta-1}$ and integrating over $\theta \in (0, 1]$, we get:
    \begin{align*}
        \left(1-\frac{1}{e}\right) V_i^*(Y_j) &= \int_{\theta=0}^1 e^{\theta-1}~V^*_i(Y_j) \dif \theta \\
        &\le \int_{\theta=0}^1 e^{\theta-1} \left(\int_{z=0}^\theta f(z) \dif z + f(\theta)\right) \dif \theta\\
        &= \int_{z=0}^1 f(z) \left(\int_{\theta=z}^1 e^{\theta-1} \dif \theta\right) \dif z + \int_{\theta=0}^1 e^{\theta-1} f(\theta) \dif \theta\\
        &= \int_{z=0}^1 \left(1-e^{z-1}\right) f(z) \dif z +  \int_{z=0}^1 e^{z-1} f(z) \dif z\\
        &= \int_{z=0}^1 f(z) \dif z = V_i(X),
    \end{align*}
    where the third transition follows from breaking the integral over the two terms and exchanging the order of integrals in the first part; and during the fourth transition, we rename the index from $\theta$ to $z$ in the second part.

    \paragraph{\CPROP{(\e)}}
    Consider an arbitrary class $i$. We want to prove that class $i$'s value for the matching is at least $\e$ times its proportional share, i.e., $V_i(X) \ge (\e) \cdot \prop_i$. Consider an arbitrary divisible partition of the items $\tilde{Y}$, consisting of non-negative vectors $\tilde{Y}_i = (\tilde{y}_{i,o})_{o \in M}$ for $i \in [k]$ satisfying $\sum_{i \in [k]} \tilde{y}_{i,o} = 1$ for each $o \in M$. It suffices to prove that:
    \[
        k \cdot V_i(X) \ge \left(1-\frac{1}{e}\right)\cdot \sum_{j \in [k]} V_i^*(\tilde{Y}_j).
    \]

    Recall that $f(\theta)$ denotes the number of agents in class $i$ who have value at least $\theta$ under $X$, $N_i(\theta)$ is the set of these $f(\theta)$ agents, and $\bar{N}_i(\theta) = N_i \setminus N_i(\theta)$. Fix an arbitrary $\theta \in (0,1]$. 
    
    Since the agents in $\bar{N}_i(\theta)$ remain unsaturated till the end, for each item $o$ liked by at least one such agent, the algorithm gives $y_{i,o} \ge 1/k$ fraction of the item to class $i$ (but not necessarily to the agents in $\bar{N}_i(\theta)$). Further, as argued above, this portion of the item must be assigned to the agents in the class who, at the time of the assignment, have value less than $\theta$. Hence, the total number of items liked by at least one agent in $\bar{N}_i(\theta)$, which is an upper bound on the contribution of these agents to $\sum_{j \in [k]} V^*_i(\tilde{Y}_j)$, is at most $k \int_0^\theta f(z) \dif z$. 

    Also, each of $f(\theta)$ many agents in $N_i(\theta)$ can contribute a value of at most $1$ to $V^*_i(\tilde{Y}_j)$ for each $j \in [k]$. Hence, the total contribution of these agents to $\sum_{j \in [k]} V^*_i(\tilde{Y}_j)$ is at most $k \cdot f(\theta)$. 

    Combining the two observations, we get that 
    \[
        \sum_{j \in [k]} V_i^*(\tilde{Y}_j) \le k \cdot \left( \int_0^\theta f(z) dz + f(\theta) \right), \quad \forall 0 < \theta \le 1.
    \]

    Multiplying the inequality by $e^{\theta-1}$, integrating over $\theta \in [0, 1]$, and following the same steps as in the \cef{(\e)} proof above, we have:
    \[
        \left( 1 - \frac{1}{e} \right) \cdot \sum_{j \in [k]} V_i^*(\tilde{Y}_j) \le k \cdot \int_0^1 f(z) \dif z = k \cdot V_i(X),
    \]
    as needed.
\end{proof}

\subsection{Impossibility Results}
Our goal in this section is to provide upper bounds on the fairness and efficiency guarantees that hold for any deterministic online algorithm for matching divisible items. We prove that the \CPROP{(\e)} guarantee achieved by \algdetdiv is tight, and establish a weaker upper bound on \cef{} and \USW{}.

\begin{theorem}\label{thm:divupperbound}
No deterministic online algorithm for matching divisible items can achieve any of the following guarantees:
\begin{itemize}
    \item \cef{\alpha} for any $\alpha > \sfrac{3}{4}$ and non-wastefulness,
    \item \CPROP{\alpha} for any $\alpha > \e$, 
    \item \USW{\alpha} for any $\alpha > \e$.
\end{itemize}
\end{theorem}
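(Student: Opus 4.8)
The plan is to prove the three bounds separately, but I would first isolate a structural observation that collapses the \CPROP{} and \USW{} bounds into a single classical fact. Take the degenerate case $k=1$, with one class $N_1 = N$. For every matching $X$ we have $V_1(X) = \usw(X)$, and $\prop_1 = \max_{X'} V_1^*(Y_1(X')) = \usw(X^*)$: here $Y_1(X')$ is just the vector of total fractions assigned, the all-ones vector is attainable (fully assign every item to the only class), and $V_1^*$ of that vector equals the maximum fractional matching of $N$ to $M$, which is $\usw(X^*)$; since $Y_1(X') \le \mathbf{1}$ and $V_1^*$ is monotone, no matching does better. Hence with $k=1$ the predicate \CPROP{\alpha} is literally identical to \USW{\alpha}, i.e.\ to having competitive ratio at least $\alpha$ for online fractional matching. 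So a single hard instance for divisible online matching simultaneously rules out \CPROP{\alpha} and \USW{\alpha} for every $\alpha > \e$. (If one prefers $k \ge 2$ to avoid the degenerate class count, the same fractional-matching core can be padded with a trivial extra class.)

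For that instance I would invoke the classical $\e$ impossibility for deterministic online fractional matching, which the excerpt already flags as well known. One subtlety is worth stating: no \emph{fixed} instance suffices against a deterministic algorithm, because in any fixed ``triangular'' neighbourhood pattern there is always a safe agent that the algorithm can fully load without future regret, yielding ratio $1$. I would therefore use an adaptive adversary (equivalently, a distribution over instances together with Yao's principle): the adversary repeatedly presents an item to a carefully chosen, shrinking set of agents so that whatever mass the algorithm banks becomes useless later. The offline optimum is a perfect matching of value $n$ (match each item to a distinct agent that was eligible when it arrived), while the water-filling analysis shows any online algorithm accrues at most $\big(1 - (1-1/n)^n\big)\,n \to (\e)\,n$. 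This yields the \USW{} bound and, via the $k=1$ reduction, the \CPROP{} bound.

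For the \cef{} bound of $\sfrac34$ I would build a two-class instance, $N_1 = \{a_1,a_2\}$ and $N_2 = \{b_1,b_2\}$, and again use an adaptive adversary together with a relabeling argument that lets me assume without loss of generality that $N_1$ is the victim. The mechanism I would exploit is specific to the optimistic valuation: because $V_1^*(Y_2)$ re-matches $Y_2$ to $N_1$ from scratch, an item liked by a class-$1$ agent that is fully routed to $N_2$ inflates $V_1^*(Y_2)$ via that agent even though it cannot actually claim the item (it is either already saturated on other items, or the item is fully assigned, so non-wastefulness permits the routing). The adversary's job is to force such routings: after each item liked by one agent from each class, it reacts to the algorithm's fractional split—if the algorithm concentrates an item on $N_1$, a follow-up item for that now-saturated agent must go to $N_2$ and swells its hoard; if instead the algorithm leaves mass with $N_2$, this directly swells the hoard while starving $N_1$. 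Balancing these two effects is what should pin the best achievable envy ratio at $\sfrac34$.

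The main obstacle is this last part. Unlike the \USW{}/\CPROP{} bound, which reduces to a textbook competitive-ratio statement, non-wastefulness constrains but does not determine the algorithm's splits, so I must solve a small adversary-versus-algorithm game and optimize over all feasible fractional responses; verifying that $\sfrac34$ (rather than something strictly between $\e$ and $\sfrac34$) is exactly the worst case—leaving the honest gap between this upper bound and the $\e$ lower bound achieved by \algdetdiv—is the delicate step where I expect most of the work to lie. A secondary care point throughout is that the impossibilities must hold even when the guarantees are required only at the end, which adaptive-adversary constructions respect automatically since they terminate on a completed instance.
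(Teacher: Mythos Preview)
Your treatment of \CPROP{} and \USW{} is exactly the paper's: reduce \CPROP{} to \USW{} via the single-class case $k=1$, then invoke the classical $\e$ lower bound for online fractional matching. Nothing to add there.

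For the \cef{} bound, however, you have a plan but not a proof, and the plan is more complicated than necessary. You propose classes of size~$2$ and a multi-round adversary that reacts to each fractional split, then concede that ``verifying that $\sfrac34$ \ldots\ is exactly the worst case \ldots\ is the delicate step where I expect most of the work to lie.'' That step is not carried out, and with two-agent classes it is not clear the game value is actually $\sfrac34$; you have not exhibited a concrete sequence of items and a calculation showing $V_i(X) \le \tfrac34\, V_i^*(Y_j)$ for every non-wasteful response.

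The paper avoids this optimisation entirely by using classes of size~\emph{three}, $N_1=\{a_1,a_2,a_3\}$ and $N_2=\{b_1,b_2,b_3\}$, and only four items. The first two items are liked by $a_1,a_2,b_1,b_2$; after the algorithm (non-wastefully) assigns them, a single WLOG relabeling fixes which class got at least half and which agent in the other class got more. The last two items are then liked by all of $N_1$ together with that distinguished agent. One line of arithmetic gives $V_2(X)\le \sfrac32$ (the ``low'' agent in $N_2$ got at most $\sfrac12$, the third agent likes nothing, the distinguished agent at most $1$) and $V_2^*(Y_1)\ge 2$ (class~$1$ holds at least one unit from each pair of items, re-matchable to $b_2$ and $b_1$ respectively). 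No game to solve, no balancing of cases. The extra agent in each class is what makes the bound fall out cleanly; your two-agent scheme lacks this slack and is why you were forced toward an unresolved optimisation.
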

\begin{proof}
We argue each impossibility separately. 

\paragraph{\cef{} and \NW{}} Consider any deterministic online algorithm that satisfies non-wastefulness. Consider an instance that consists of two classes, $N_1=\{a_1,a_2,a_3\}$ and $N_2=\{b_1,b_2,b_3\}$, and four items $o_1,o_2,o_3,o_4$ arriving in that order. We denote by $X$ the matching that will be produced by the algorithm on this instance. 

Agents $a_1$, $a_2$, $b_1$, and $b_2$ like the first two items $o_1$ and $o_2$. By non-wastefulness, the algorithm must fully divide $o_1$ and $o_2$ between $\{a_1,a_2,b_1,b_2\}$. Without loss of generality, suppose that the total fraction of these items assigned to class $N_1$ is at least the total fraction assigned to class $N_2$, i.e., $\sum_{a \in N_1} \sum_{o \in \set{o_1,o_2}} x_{a,o}  \geq \sum_{b \in N_2} \sum_{o \in \set{o_1,o_2}} x_{b,o}$. Further, we assume, without loss of generality, that agent $b_1$ obtains at least as much total fraction of these items as agent $b_2$, i.e., $\sum_{o \in \set{o_1,o_2}} x_{b_1,o} \geq \sum_{o \in \set{o_1,o_2}} x_{b_2,o}$. Finally, all agents of class $N_1$ as well as agent $b_1$ like the remaining two items $o_3$ and $o_4$; agents $b_2$ and $b_3$ do not like them. We will prove that $V_2(X) \leq (\sfrac{3}{4}) \cdot V^*_2(Y_1)$. 

First, we show that $V_2(X) \leq \sfrac{3}{2}$. Observe that the value derived by $b_2$ under $X$ is at most $\sfrac{1}{2}$. This holds because the total fraction of $o_1$ and $o_2$ assigned to $b_2$ is at most $\sfrac{1}{2}$ by the assumptions above, and the agent does not like items $o_3$ and $o_4$. Further, agent $b_3$ does not like any of the items. Thus, the total value class $N_2$ can achieve under $X$ is $V_2(X) \le 1+\sfrac{1}{2} = \sfrac{3}{2}$. 

Next, we show that $V^*_2(Y_1) \geq 2$. Note that $N_1$ must receive a total fraction of at least $1$ from each of $\{o_1,o_2\}$ and $\{o_3,o_4\}$. Since $b_2$ likes every item in $\{o_1,o_2\}$ and $b_1$ likes every item in $\{o_3,o_4\}$, class $N_2$ can optimistically derive a total value of at least $2$ by assigning $Y_{1,o_1}$ and $Y_{1,o_2}$ fractions of $o_1$ and $o_2$ to $b_2$ (capped by $1$), and $Y_{1,o_3}$ and $Y_{1,o_4}$ fractions of $o_3$ and $o_4$ to $b_1$ (capped by $1$). 

This shows that the algorithm does not achieve \cef{\alpha} for any $\alpha > \sfrac{3}{4}$. 

\paragraph{\USW{}}
Note that the utilitarian social welfare is simply the size of the (divisible) matching, which is independent of the class information. Hence, the $\e$ upper bound on \USW{} follows from the classical $\e$ upper bound on the competitive ratio of any online divisible matching algorithm; see, e.g., the work of \citet{kalyanasundaram2000optimal}.

\paragraph{\CPROP{}} Consider an instance of a single class. In this case, the proportional share of the class coincides with the value $\usw(X^*)$ of a \USW{}-optimal matching $X^*$. Thus, the $\e$ upper bound on \CPROP{} approximation follows from the $\e$ upper bound on \USW{} approximation.
\end{proof}

\begin{remark}
Similar to Remark~\ref{remark:allocated:MMS}, one may wonder what we can say about a weaker notion of proportionality with respect to only the \emph{allocated items}, i.e., if the proportional share of each class is defined based on the divisible matchings of the allocated items (instead of all items). In \Cref{prop:impossibility:prop:allocated} in Appendix~\ref{app:divisible}, we show that the upper bound of $\e$ continues to hold even for this weaker version. However, unlike in the case of indivisible items, this does not immediately follow from the proof above (which considers an instance with a single class, for which, trivially, the weaker version is exactly satisfied). The proof of \Cref{prop:impossibility:prop:allocated} is much more intricate. 
\end{remark}

While \algdetdiv achieves the optimal $\e$ approximation of \CPROP{}, its guarantees with respect to \cef{} and \USW{} identified in \Cref{thm:divisible} are weaker than the upper bounds in \Cref{thm:divupperbound}. One might wonder if this is simply because our analysis in \Cref{thm:divisible} is loose. We show that this is not the case. Hence, future work must focus either on proving better upper bounds, or on designing new algorithms which might surpass \algdetdiv. 

\begin{restatable}{proposition}{divisibleUpperBound}\label{prop:alg2:EF}
\algdetdiv does not achieve any of the following guarantees:
\begin{itemize}
    \item \cef{\alpha} for any $\alpha > \e$, 
    \item \CPROP{\alpha} for any $\alpha > \e$,
    \item \USW{\alpha} for any $\alpha > \half$.
\end{itemize}
\end{restatable}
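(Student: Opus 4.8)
The plan is to establish the three non-achievability claims separately, each on a tailored online instance on which I can track \algdetdiv exactly. Two facts from the proof of \Cref{thm:divisible} drive everything: on a \emph{single} class the algorithm degenerates into classical water-filling (Balance), since the class-phase hands the whole item to the unique class and the individual-phase water-fills it; and at the class level each item is split \emph{equally per item} (capped by demand) across classes with an unsaturated neighbour, rather than being water-filled across items. The \CPROP{} and \cef{} bounds exploit the first fact (water-filling is only $(\e)$-competitive), while the \USW{} bound must exploit the second.

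For the \CPROP{} claim I would restrict to a single class $N_1=\{a_1,\dots,a_n\}$ and feed the standard upper-triangular hard instance for water-filling: item $o_t$ is liked exactly by $a_t,a_{t+1},\dots,a_n$. With one class, each item is fully given to $N_1$ and then water-filled among its unsaturated neighbours, so \algdetdiv runs Balance, whose total matched mass is $n\bigl(1-(1-\tfrac1n)^n\bigr)\to n\,(\e)$. On the other hand, matching $o_t$ to $a_t$ is a perfect matching, so $\prop_1=n$. Hence $V_1(X)/\prop_1\to\e$, ruling out \CPROP{\alpha} for any $\alpha>\e$; the $\USW{\alpha}$ consequence for this single-class instance is immediate since $V_1(X)=\usw(X)$ here as well, but that only yields $\e$ and not $\half$, so the genuine \USW{} bound needs a different construction (below).

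For the \cef{} claim I would build a two-class instance engineered so that the inequality used in the $(\e)$-\cef{} analysis is tight. Equality there forces the density $f(\theta)$ (the number of class-$1$ agents with value at least $\theta$) to be proportional to $e^{-\theta}$, in which case $V_1(X)=(\e)\cdot V_1^*(Y_2)$. The plan is therefore twofold: (i) drive class $1$'s utility profile to the exponential shape by feeding it the upper-triangular sub-instance whose item-mass is shared equally with class $2$; and (ii) arrange class $2$'s final bundle $Y_2$ to consist of exactly the items class $1$ most wants, with each such item also liked by a dedicated class-$2$ agent that absorbs the class-$2$ half. Then matching $a_t$ to $o_t$ certifies $V_1^*(Y_2)=V_1(X)/(\e)$, giving ratio exactly $\e$. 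The work is in calibrating the arrival order and the class-$2$ demands so that the equal class split together with within-class water-filling reproduces the $e^{-\theta}$ profile on the nose.

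The \USW{} claim is the main obstacle, precisely because the single-class behaviour is water-filling and hence already $(\e)$-competitive; the $\half$ barrier can only be reached by weaponising the \emph{equal} class split. My plan is to use many classes so that an item liked across all of them is split into tiny $\beta_o=1/k$ shares, and to place in one class a few ``victim'' agents alongside a supply of fresh ``thirsty'' agents. Each item the victims like is shared so that the class's tiny share is fully absorbed by the thirstiest fresh agent, leaving the victims' water levels essentially unchanged; since the item is nonetheless fully assigned across the classes, \NW{} holds, yet an optimal matching can dedicate one such item to each victim and saturate it. The hard part is organising the instance in layers so that the stranded capacity is a \emph{constant} fraction of the total: a single strand-and-divert gadget (one victim stranded at value $\half$ against an optimum of $3$) only loses a $\tfrac16$ fraction, so I must chain such gadgets and verify along the entire adaptive arrival sequence that each victim is never the thirstiest agent at any item it likes. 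Getting this bookkeeping to push $\usw(X)/\usw(X^*)$ all the way down to $\half$ is where I expect the real difficulty to lie.
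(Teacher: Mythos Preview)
Your \CPROP{} argument is correct and matches the paper's, which simply invokes the general upper bound from \Cref{thm:divupperbound} (itself proved via a single-class instance). Your \cef{} plan is in the right spirit: the paper also uses two classes with class~$1$ facing an upper-triangular neighbourhood and class~$2$ absorbing half of every item. The paper's concrete instance is simpler than your exponential-profile calibration suggests, though: class~$2$ has $2n$ agents each liking all $2n$ items, so every item is split $\tfrac12$--$\tfrac12$, and a direct harmonic-sum calculation shows class~$1$ saturates after roughly $(1-\tfrac1e)n$ rounds while $V_1^*(Y_2)\ge n$.

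The \USW{} part is where your plan diverges, and where you are making life much harder than necessary. Your victim/thirsty-within-a-class gadget fights against the individual-phase water-filling: to keep a designated victim from ever being the lowest-level neighbour you must keep introducing fresh zero-level agents, and even then each gadget strands only a small fraction, forcing the chaining whose bookkeeping you already flag as the real difficulty. The paper sidesteps all of this by exploiting the class-level equal split directly, with no intra-class tricks at all. Take $n$ singleton classes $N_j=\{c_j\}$ plus one class $N_{n+1}$ with $n$ agents; first send $n$ ``red'' items liked by everyone, then $n$ ``blue'' items $b_j$ liked only by $c_j$. Each red is split into $n{+}1$ equal shares, so after the reds every $c_j$ sits at level $\tfrac{n}{n+1}$ while each agent in $N_{n+1}$ sits at only $\tfrac{1}{n+1}$. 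Now each blue $b_j$ can absorb only the residual $\tfrac{1}{n+1}$ of $c_j$, giving $\usw(X)=n+\tfrac{n}{n+1}$, whereas the perfect matching ($c_j\leftrightarrow b_j$ and $N_{n+1}$ takes the reds) has $\usw(X^*)=2n$; the ratio tends to $\tfrac12$. The entire loss comes from the class-phase mis-allocating the reds to the singletons---no within-class manoeuvring is needed.
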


\section{Randomized Algorithms for Indivisible Items} 
\label{sec:randomized}

Recall from \Cref{sec:indivisible} that for indivisible items, no deterministic online algorithm can achieve \CMMS{\alpha} for any $\alpha > \half$. When moving to randomized algorithms, one can naturally hope to approximate \CPROP{} instead of \CMMS{} because the value to a class is evaluated in expectation. However, apriori it is not clear whether a randomized algorithm can achieve \CPROP{\alpha} for any $\alpha > 1/2$.

By applying a recently introduced rounding technique, called \textit{Online Correlated Selection} (OCS) \citep{fahrbach2020edge}, to the divisible matching given by \algdetdiv (Algorithm~\ref{alg:divisible}), we are able to design a randomized algorithm for indivisible items that achieves \CPROP{0.593}.

We start by introducing a recent result about OCS that forms the backbone of our approach.

\begin{lemma}[c.f., \citealt{gao2021improved}]
    \label{lem:semi-ocs}
    There is a polynomial-time online algorithm which works as follows. In each step, it takes as input a non-negative vector $(\tilde{x}_{a,o})_{a \in N}$ for some $o \in M$ satisfying $\sum_{a \in N} \tilde{x}_{a,o} \le 1$ and selects an agent $a$ with positive $\tilde{x}_{a,o}$. Further, by the end, each agent $a$ is selected at least once with probability at least:
    \[
        p(\tilde{x}_a) = 1 - \exp\left( - \tilde{x}_a - \tfrac{1}{2} \cdot \tilde{x}_a^2 - \tfrac{4-2\sqrt{3}}{3} \cdot \tilde{x}_a^3 \right),
    \]
    where $\tilde{x}_a = \sum_{o \in M} \tilde{x}_{a,o}$.
\end{lemma}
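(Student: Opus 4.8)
The plan is to obtain the lemma as a direct consequence of the semi-OCS construction of \citet{gao2021improved}, specialized to our ``select at least once'' objective. The reduction is immediate: at each step the fractional vector $(\tilde{x}_{a,o})_{a \in N}$ with $\sum_a \tilde{x}_{a,o} \le 1$ is exactly the kind of sub-distribution that a semi-OCS consumes, and the event that agent $a$ is ever selected corresponds to $a$ being ``matched'' in the OCS terminology. It therefore suffices to upper bound, for each agent $a$, the \emph{survival probability} $q_a := \Pr[a \text{ is never selected}]$ by $\exp(-\tilde{x}_a - \tfrac12 \tilde{x}_a^2 - \tfrac{4-2\sqrt3}{3}\tilde{x}_a^3)$, since then $a$ is selected at least once with probability $1 - q_a \ge p(\tilde{x}_a)$.

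The first step is to record the baseline given by independent rounding: selecting $a$ in step $o$ with probability $\tilde{x}_{a,o}$ independently across steps yields $q_a = \prod_{o}(1 - \tilde{x}_{a,o}) \le \exp(-\sum_o \tilde{x}_{a,o}) = e^{-\tilde{x}_a}$, which recovers only the leading term of the exponent. The improvement comes from the OCS mechanism, which correlates the random decisions of different steps \emph{negatively} for each agent: informally, the algorithm maintains for every agent a record of its recent ``unrealized'' mass and, when new mass for $a$ arrives, pairs it with previously stored mass through a shared source of randomness, arranged so that a paired agent is selected in at least one of the two coupled steps strictly more often than independence would predict. I would describe this pairing process concretely (it is the multi-way variant of the construction of \citet{fahrbach2020edge} refined by \citet{gao2021improved}) and verify that it runs in polynomial time, since only a bounded amount of state per agent is maintained.

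The analytic heart of the argument is to track how each coupling multiplies $q_a$ by a factor smaller than the independent factor $1-\tilde{x}_{a,o}$, and then to aggregate these gains over all steps. Expanding the per-step multiplicative contribution to $\log q_a$ and collecting terms by order in the increments gives $-\tilde{x}_a$ at first order; the pairwise negative correlation contributes the $-\tfrac12 \tilde{x}_a^2$ term; and the triple-wise refinement of \citet{gao2021improved}, in which the coupling parameters are tuned optimally, produces the cubic term with the specific constant $\tfrac{4-2\sqrt3}{3}$ (the $\sqrt3$ is precisely the output of that parameter optimization, and indeed $4-2\sqrt3 \approx 0.536$ reappears as their competitive ratio). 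A subtlety to handle is that the bound must hold \emph{uniformly over the adversarial arrival pattern}, i.e.\ over every way of splitting the total mass $\tilde{x}_a$ into per-step increments; one checks that the stated exponential is a valid upper bound on $q_a$ in the worst case, which, exactly as for the independent baseline, is approached in the continuous (infinitesimally split) limit.

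The main obstacle is the third-order analysis: establishing the precise negative-correlation inequalities among coupled triples of steps and confirming that, after optimizing the coupling parameters, they aggregate to \emph{exactly} the cubic exponent with constant $\tfrac{4-2\sqrt3}{3}$ rather than to a weaker bound. A secondary technical point is adapting the original discrete, two-candidate OCS to our fractional, many-candidate inputs while preserving the correlation guarantees; I would handle this by the standard device of decomposing each fractional step into infinitesimal selection events and passing to the limit, verifying that the per-agent survival bound behaves continuously under this refinement so that the claimed form of $p(\tilde{x}_a)$ is inherited.
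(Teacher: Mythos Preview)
The paper does not prove this lemma at all: it is stated with the attribution ``c.f., \citet{gao2021improved}'' and used purely as a black box in the proof of \Cref{thm:indivisible-randomized}. The only commentary the paper offers is the brief comparison with independent rounding (the $1-e^{-\tilde{x}_a}$ baseline) in the paragraph immediately following the lemma, which you also reproduce as your ``first step.''

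Your proposal therefore goes well beyond what the paper does, attempting a sketch of the actual argument from \citet{gao2021improved}. As a high-level narrative this is plausible, but a couple of points deserve caution. First, the remark that $4-2\sqrt{3}\approx 0.536$ ``reappears as their competitive ratio'' is a numerical coincidence you should not lean on as an explanation; the $0.536$ competitive ratio for edge-weighted matching comes from a separate optimization and is not simply the cubic coefficient here. Second, your description of the mechanism (pairing new mass with stored mass, then a ``triple-wise refinement'') is suggestive but does not match the specific multi-way semi-OCS construction closely enough to be a proof; in particular, the claim that the worst case for the survival bound is the infinitesimally-split limit, and that the three exponent terms arise cleanly from first-, second-, and third-order correlation effects, would each require substantial work to justify. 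If the intent is merely to invoke the result, a one-line citation (as the paper does) suffices; if the intent is to reprove it, the sketch as written has real gaps at exactly the points you flag as ``the main obstacle.''
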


Technically, such an algorithm is called (multi-way) semi-OCS instead of OCS. But the nomenclature is unimportant for our application, so we will call it OCS for brevity, and refer interested readers to the works of \citet{fahrbach2020edge} and \citet{gao2021improved} for a detailed comparison.

\emph{How good is the guarantee in \Cref{lem:semi-ocs}?} For comparison, consider the simpler independent randomized rounding algorithm, which, upon receiving the vector $(\tilde{x}_{a,o})_{a \in N}$, selects each agent $a$ with probability $\tilde{x}_{a,o}$, independently of the rounding outcomes in the previous steps. By the end, each agent $a$ is selected at least once with probability $1-\prod_{o \in M} (1-\tilde{x}_{a,o}) \ge 1-\exp(-\sum_{o \in M} \tilde{x}_{a,o}) = 1-\exp(-\tilde{x}_a)$. Readers can verify that using this weaker bound in the proof of \Cref{thm:indivisible-randomized} only yields  \CPROP{\half}. The improved guarantee in \Cref{lem:semi-ocs} is critical for achieving an approximation better than $\half$.

%
Our algorithm, \algrandindiv (presented as \Cref{alg:indivisible-randomized}), runs a variant of \algdetdiv in the background to get a guiding divisible matching $\tilde{X} = (\tilde{x}_{a,o})_{a \in N, o \in M}$. The only difference is that unlike \algdetdiv, this variant does not cap the value (total fraction of all items) assigned to an agent at $1$. This is because the algorithm will perform rounding to compute an indivisible matching, and by \Cref{lem:semi-ocs}, the probability that an agent $a$ is matched depends on the value $\tilde{x}_a$ of the agent in the divisible matching in such a manner that even reaching a value of $1$ would not guarantee being matched with certainty. 

Upon receiving a new item $o$, the algorithm first continues running this variant of \algdetdiv to obtain the guiding division $(\tilde{x}_{a,o})_{a \in N}$ (Lines~\ref{alg:indivisible-randomized:class}-\ref{alg:indivisible-randomized:beforeOCS}), and then lets OCS select an agent $a^*$ accordingly (\Cref{alg:indivisible-randomized:OCS}). If the selected agent $a^*$ is not yet matched, the algorithm matches item $o$ to this agent. If $a^*$ is already matched, the algorithm matches item $o$ to an arbitrary unmatched agent who likes it, and discards the item if there is no such agent (\Cref{alg:indivisible-randomized:match}). 

\begin{algorithm}
\DontPrintSemicolon
\caption{\algrandindiv}
\label{alg:indivisible-randomized}
    Initialize an empty indivisible matching $X = (x_{a,o})_{a \in N, o \in M}$\;
    Initialize an empty divisible matching $\tilde{X} = (\tilde{x}_{a,o})_{a \in N, o \in M}$\;
    Maintain a class-level divisible matching $\tilde{Y} = (\tilde{y}_{i,o} = 0)_{i \in [k], o \in M}$ such that $y_{i,o} = \sum_{a \in N_i} \tilde{x}_{a,o}$\;
    \When{item $o\in M$ arrives}{
        \smallskip
        \textbf{/*class-phase divisible matching*/}\;\label{alg:indivisible-randomized:class}
        For each class $i$, let $N_{i,o}$ be the set of agents in class $i$ who like item $o$\;
        Let $k_o$ be the number of classes $i$ such that $N_{i,o} \neq \emptyset$\;
        Let $\tilde{y}_{i,o} = \frac{1}{k_o}$ for each of these $k_o$ classes\;
        \smallskip
        \textbf{/*individual-phase divisible matching*/}\;
        \For{each class $i$ with $y_{i,o}>0$}{
            Find $\gamma_o$ such that $\sum_{a \in N_{i,o}} \max(\gamma_o - \tilde{x}_a,0) = \tilde{y}_{i,o}$\;
            Let $\tilde{x}_{a,o} = \max(\gamma_o - \tilde{x}_a,0)$ for all $a \in N_{i,o}$\;\label{alg:indivisible-randomized:beforeOCS}
        }
        \smallskip
        \textbf{/*indivisible matching rounded by OCS*/}\;
        %
        Send $(\tilde{x}_{a,o})_{a \in N}$ to the OCS in Lemma~\ref{lem:semi-ocs} and let it select an agent $a^*$\;\label{alg:indivisible-randomized:OCS}
        Match $o$ to $a^*$ if $a^*$ is not yet matched, and to an arbitrary unmatched neighbor (if any) otherwise\;\label{alg:indivisible-randomized:match}
    }
    \smallskip
\end{algorithm}

\begin{theorem}\label{thm:indivisible-randomized}
For randomized matching of indivisible items, \algrandindiv (\Cref{alg:indivisible-randomized}) satisfies non-wastefulness, \CPROP{0.593}, and \USW{\half}.
\end{theorem}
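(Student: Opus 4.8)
The plan is to dispatch non-wastefulness and \USW{\half} immediately, and then devote the bulk of the argument to \CPROP{0.593}. Non-wastefulness is built into the matching rule of \Cref{alg:indivisible-randomized}: whenever the OCS-selected agent $a^*$ is already matched, the algorithm falls back to matching $o$ to an arbitrary unmatched neighbor whenever one exists, so no item liked by an unsaturated agent is discarded. Since the output matching is non-wasteful, \Cref{prop:nw-usw} yields \USW{\half} for free.

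For \CPROP{0.593}, the first step is to convert the OCS guarantee into a per-agent matching probability. The key observation is that if OCS ever selects an agent $a$, say at step $o$, then $a$ is matched by the end: at that step $a$ is either already matched, or gets matched to $o$. Hence $\Pr[a\text{ matched}]\ge p(\tilde{x}_a)$ by \Cref{lem:semi-ocs}, where $\tilde{x}_a=\sum_{o}\tilde{x}_{a,o}$ is the (uncapped) value of $a$ in the guiding divisible matching. Because the algorithm only ever matches an item to a neighbor, each matched agent contributes exactly $1$, so $V_i(X)=|\{a\in N_i:\text{matched}\}|$ and therefore
\[
\E[V_i(X)]=\sum_{a\in N_i}\Pr[a\text{ matched}]\ge\sum_{a\in N_i}p(\tilde{x}_a).
\]

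Next I would port the water-level machinery from the proof of \Cref{thm:divisible} to the guiding matching. Define $f(\theta)=|\{a\in N_i:\tilde{x}_a\ge\theta\}|$. Exactly as in the \CPROP{(\e)} argument there, for any item $o$ liked by an agent whose final value lies below $\theta$, the individual-phase water level $\gamma_{i,o}$ satisfies $\gamma_{i,o}<\theta$, so the entire class-$i$ share $1/k_o\ge 1/k$ is poured onto agents below level $\theta$; comparing the sum of these shares with the total mass $\int_0^\theta f(z)\dif z=\sum_{a\in N_i}\min(\theta,\tilde{x}_a)$ deposited below $\theta$ bounds the number of items liked by below-$\theta$ agents by $k\int_0^\theta f(z)\dif z$. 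Combining this with the fact that each of the $f(\theta)$ agents at level $\ge\theta$ contributes at most $1$ to each of the $k$ bundles of any partition, and taking the partition that witnesses $\prop_i$, I obtain
\[
\prop_i\le f(\theta)+\int_0^\theta f(z)\dif z,\qquad\forall\,\theta>0.
\]
The only change from the capped deterministic setting is that $\theta$ now ranges over all of $(0,\infty)$ since agent values are uncapped.

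The final and hardest step is a factor-revealing argument linking $\sum_{a\in N_i}p(\tilde{x}_a)=\int_0^\infty p'(z)f(z)\dif z$ (using $p(0)=0$) to the envelope bound above. The plan is to exhibit a probability density $w$ on $(0,\infty)$, with tail $W(z)=\int_z^\infty w$, satisfying the pointwise inequality
\[
p'(z)\ge 0.593\,\bigl(w(z)+W(z)\bigr),\qquad\forall\,z\ge 0.
\]
Given such a $w$, multiplying the envelope inequality by $w(\theta)$, integrating over $\theta$, and swapping the order of the resulting iterated integral yields $\prop_i\le\int_0^\infty f(z)\bigl(w(z)+W(z)\bigr)\dif z$; the pointwise inequality, together with $f\ge 0$, then gives $\int_0^\infty p'(z)f(z)\dif z\ge 0.593\int_0^\infty f(z)\bigl(w(z)+W(z)\bigr)\dif z\ge 0.593\,\prop_i$, as needed. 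I expect this to be the crux: the constant $0.593$ is exactly the largest $c$ for which a feasible $w$ exists, and checking $p'(z)\ge c\bigl(w(z)+W(z)\bigr)$ across the full range of $z$ hinges on the precise cubic exponent of \Cref{lem:semi-ocs}. A natural first guess, mirroring the $e^{\theta-1}$ weighting and the tight profile $f(z)=e^{-z}$ of the deterministic proof (and consistent with the $\half$ bound noted for the weaker rounding $p(x)=1-e^{-x}$, where $w(z)=e^{-z}$ forces $c=\half$), is an exponential-type $w$; the cubic correction in $p$ is precisely what lifts the attainable $c$ from $\half$ to $0.593$, so the delicate part will be selecting $w$ and verifying the inequality everywhere, likely via a short monotonicity or numerical check.
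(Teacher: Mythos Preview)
Your proposal is correct and follows the paper's approach essentially verbatim. The only difference is in the final step: where you seek a density $w$ satisfying $p'(z)\ge 0.593\,(w(z)+W(z))$, the paper instead solves the equality $c(z)+\int_z^\infty c(\theta)\,d\theta = p'(z)$ directly---obtaining $c(\theta)=-e^\theta\int_\theta^\infty p''(y)e^{-y}\,dy$---and then numerically computes $\int_0^\infty c(\theta)\,d\theta\approx 0.5936$; normalizing this $c$ gives exactly your $w$, so your ``crux'' is resolved by that explicit choice.
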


\begin{proof}
    Non-wastefulness is clear from \Cref{alg:indivisible-randomized:match} of \Cref{alg:indivisible-randomized}. \Cref{prop:nw-usw} implies \USW{\half}. Hence, we focus on the interesting \CPROP{0.593} guarantee. 
    
    Fix an arbitrary class $i$. The first part of the analysis bounds the proportional value of class $i$ using the guiding divisible matching $\tilde{X}$. This part is almost verbatim to its counterpart in the proof of \Cref{thm:divisible}, except we do not bound the value threshold $\theta$ by $1$. We include this part to be self-contained.

    For $\theta \ge 0$, let $f(\theta)$ denote the number of agents in class $i$ who have value at least $\theta$ under $\tilde{X}$. Let $N_i(\theta)$ denote the set of these $f(\theta)$ agents, and let $\bar{N}_i(\theta) = N_i \setminus N_i(\theta)$.
    
    Fix any $\theta > 0$. For each item $o$ liked by at least one agent in $\bar{N}_i(\theta)$, \Cref{alg:indivisible-randomized} assigns a fraction $\tilde{y}_{i,o} \ge \sfrac{1}{k_{o}}$ to class $i$ in the guiding divisible matching (but not necessarily to the agents in $\bar{N}_i(\theta)$). Further, any agent in $N_i$ receiving a positive share of item $o$ must have value less than $\theta$ right after receiving it. Hence, the total number of items liked by at least one agent in $\bar{N}_i(\theta)$ is at most $k \int_0^\theta f(z) \dif z$. 
    
    On the other hand, the total value that agents in $N_i(\theta)$ can obtain from any set of items is at most $f(\theta)$ (at most $1$ per agent).
    
    Therefore, for any divisible partition of the items, denoted by non-negative vectors $\hat{Y}_i = (\hat{y}_{i,o})_{o \in M}$ for $i \in [k]$ such that $\sum_{i \in [k]} \hat{Y}_{i,o} = 1$ for each $o \in M$, we have:
    \[
        \sum_{j \in [k]} V_i^*(\hat{Y}_j) \le k \cdot \left( \int_0^\theta f(z) \dif z + f(\theta) \right), \quad \forall \theta > 0.
    \]

    This implies that the proportional share of $i$ is bounded by:
    \begin{equation}
        \label{eqn:indivisible-mms-bound}
        \prop_i \le \int_0^\theta f(z) \dif z + f(\theta), \quad \forall \theta > 0.
    \end{equation}

    Next, we lower bound the expected value of class $i$ for the randomized indivisible matching $X$. OCS ensures that for each agent $a$ in class $i$, its probability of being matched is at least $p(\tilde{x}_a)$.  Hence, the expected value of class $i$ for $X$ is:
    \begin{align*}
        \E[V_i(X)]
        &
        \ge \sum_{a \in N_i} p(\tilde{x}_a)
        \tag{Lemma~\ref{lem:semi-ocs}} \\
        &
        = - \int_0^\infty p(\theta) \dif f(\theta)
        \tag{definition of $f(\theta)$} \\
        &
        = \int_0^\infty p'(\theta) f(\theta) \dif \theta
        ~.
        \tag{integration by parts, $p(0) = f(\infty) = 0$}
    \end{align*}

    Multiplying inequality~\eqref{eqn:indivisible-mms-bound} by non-negative coefficients $c(\theta)$ (to be determined later), and integrating over $\theta > 0$ gives that:
    \begin{align*}
        \prop_i \cdot \int_0^\infty c(\theta) \dif\theta
        &
        \le \int_0^\infty c(\theta) \left( \int_0^\theta f(z) \dif z + f(\theta) \right) \dif \theta \\
        &
        = \int_0^\infty c(\theta) \int_0^\theta f(z) \dif z\, \dif \theta + \int_0^\infty c(\theta) f(\theta) \dif \theta
        \\
        &
        = \int_0^\infty \left( \int_z^\infty c(\theta) \dif \theta + c(z) \right) f(z) \dif z,
    \end{align*}
    where, during the last transition, we exchange the order of integrals in the first part and change the index from $\theta$ to $z$ in the second part.

    We choose $c(\theta) = - e^\theta \int_\theta^\infty p''(y) e^{-y} \dif y$, so that $\int_z^\infty c(\theta) \dif \theta + c(z) = p'(z)$ for all $z > 0$. Hence, we get that:
    \[
         \prop_i \cdot \int_0^\infty c(\theta) \dif \theta
 \le \int_0^\infty p'(z) f(z) \dif z \le \E\,V_i(X) 
        ~.
    \]
    
    The theorem then follows by numerically calculating the integral:
    \[
        \int_0^\infty c(\theta) \dif \theta \approx 0.5936 > 0.593.
    \]
    
    This concludes the proof of the theorem.
\end{proof}

In \Cref{app:randomized}, we briefly discuss other randomized algorithms and their obstacles in achieving better than $\half$ approximation to \CPROP{}. We also present a randomized algorithm based on the classical \ranking algorithm, which achieves \cef{(\e)}. While it achieves this guarantee non-vacuously (i.e., it does not simply return the empty matching), it still violates non-wastefulness. It would be interesting to analyze its efficiency. 

\section{Discussion}

Our work introduces the novel framework of class fairness in online matching. We derive bounds on approximate fairness and efficiency guarantees that deterministic and randomized online algorithms can achieve in this framework for matching divisible and indivisible items, and leave open a number of exciting open questions. For example, can a deterministic algorithm for matching divisible items achieve a \cef{} approximation together with non-wastefulness better than $\e$? (We conjecture the answer to be \emph{no}.) Can it achieve any reasonable \cef{} or \CPROP{} approximation together with a \USW{} approximation better than $\half$ (ideally, $\e$)? Can a randomized algorithm for matching indivisible items achieve any reasonable \cef{} approximation together with either non-wastefulness or a \USW{} approximation? 

More broadly, our basic framework paves the road for interesting extensions. For example, one can allow agents to have non-binary values for the items, consider class fairness notions that give more importance to bigger classes, consider both agents and items arriving online~\citep{huang2020fully}, study weaker adversarial models, or consider stochastic instead of adversarial arrivals. 

All of these fall under the umbrella of online fair allocation of \emph{private goods}, which is a literature still in its infancy with many exciting research directions in sight. Studying its counterpart, online fair allocation of \emph{public goods}, is another worthy goal, which may bring its own set of challenges.

\section*{Acknowledgments}
Hadi Hosseini acknowledges support from NSF IIS grants \#2052488 and \#2107173. Zhiyi Huang was supported in part by an RGC grant \#17201221. Ayumi Igarashi was supported by JST PRESTO under grant number JPMJPR20C1. Nisarg Shah was partially supported by an NSERC Discovery Grant.

\bibliographystyle{plainnat}
\bibliography{abb,references,online}

\clearpage
\appendix
\section*{Appendix}

\section{Omitted Material from Section~\ref{sec:model}}

\NonWastefulHalf*

\begin{proof}[Proof of \Cref{prop:nw-usw}]
Let $X^*$ be a matching maximizing the utilitarian social welfare. Without loss of generality, we can pick $X^*$ to be integral. Let $X$ be any non-wasteful (divisible or indivisible) matching. Hence, for every $(a,o) \in E$, we have $\sum_{o' \in M} x_{a,o'} = 1$ or $\sum_{a' \in N} x_{a',o}=1$. Then, we have
\begin{align*}
    \usw(X^*) &= \sum_{(a,o) : x^*_{a,o} = 1} 1 \le \sum_{(a,o) : x^*_{a,o} = 1} \left( \sum_{o' \in M} x_{a,o'} + \sum_{a' \in N} x_{a',o} \right)\\
    &\le \sum_{a \in N} \sum_{o' \in M} x_{a,o'} + \sum_{o \in M} \sum_{a' \in N} x_{a',o} = 2 \cdot \usw(X),
\end{align*}
where the second transition holds because $x^*_{a,o}=1$ implies $(a,o) \in E$ and $X$ is non-wasteful, and the third transition holds because $X^*$ is an indivisible matching (i.e., if $x^*_{a,o}=1$, $x^*_{a',o'} = 1$, and $(a,o) \neq (a',o')$, then $a \neq a'$ and $o \neq o'$). This proves that $X$ is \USW{\half}.
\end{proof}

\CMMSNonWastefull*
\begin{proof}[Proof of \Cref{prop:MMS:NW}]
    Let us first consider indivisible items. Let $A$ be any deterministic online algorithm that may be wasteful.
    Consider a non-wasteful version of it, denoted as $A'$, that works as follows.
    It runs $A$ in the background and treats $A$'s output as an advice.
    Importantly, $A$ keeps its own internal state and is oblivious to the actual matching decisions made by $A'$.
    For an item $o$, suppose that $A$ matches $o$ to agent $a$.
    Algorithm $A'$ would follow $A$'s advice and match $o$ to $a$ if $a$ is not yet matched, and would otherwise match $o$ to any unmatched agent who likes item $o$.
    
    By definition, $A'$ is non-wasteful.
    Further, we can prove by induction over the steps that the set of agents matched by $A'$ is 
    a superset of the set of agents matched by $A$. Since \CMMS{} is a monotone property (i.e., increasing agent values preserves its approximation), $A'$ achieves at least as good an approximation of \CMMS{} as $A$ does.
    
    For divisible items, the same proof works for \CPROP{}, except $A'$ now gives a fraction of $o$ to each agent $a$ that is the minimum of the fraction of $o$ matched to $a$ under the advice given by $A$ and the remaining capacity of $a$ in the current matching maintained by $A'$. 
\end{proof}

\section{Omitted Material from Section~\ref{sec:indivisible}}

\subsection{Pessimal class envy-freeness} \label{sec:pessimistic}

One may wonder whether relaxing the way each class measures its hypothetical value for a set of items could help alleviating the incompatibility between class envy-freeness and non-wastefulness. We show that even if each class considers a pessimistic value for a set of items (in other words, considers worst-case scenario for matching the items), the clash between envy-freeness and non-wastefulness persists.

Given a vector $\bfy=(y_o)_{o \in M} \in \{0,1\}^M$ representing a set of items, the \textit{pessimistic valuation} $V^{\ominus}_i(\bfy)$ of class $i$ for $\bfy$ is the value of a \textit{minimum-cardinality maximal matching} between the agents of $N_{i}$ and the set $\{\, o \in M \mid y_o=1 \,\}$. 
This problem has shown to be NP-hard for graphs with maximum degree 3 and $k$-regular bipartite graphs for $k\geq 3$ \citep{demange2008minimum,yannakakis1980edge}.

We compare the value $V_i(X)$ derived by class $i$ from matching $X$ with class $i$'s pessimistic valuation for the items matched to another class $j$, i.e. $V^{\ominus}_i(Y_j(X))$.

\begin{definition}[Pessimal class envy-freeness]
A matching $X$ is $\alpha$-\emph{pessimal class envy-free} (\pef{\alpha}) if for every pair of classes $i,j \in [k]$, $V_{i}(X) \geq \alpha \cdot V_{i}^{\ominus}(Y_{j}(X))$. When $\alpha=1$, we simply refer to it as pessimal class envy-freeness (\pef{}). 
\end{definition}

Similar to its optimistic counterpart, \cef{}, a \pef{} matching may not always exist. Therefore, we consider the following relaxation of \pef{} for integral matchings.

\begin{definition}[Pessimal class envy-freeness up to one item]
An integral matching $X$ is $\alpha$-\emph{pessimal class envy-free up to one item} (\PEF{\alpha}) if for every pair of classes $i,j \in [k]$, either $Y_{j}(X)=\emptyset$ or there exists an item $o \in Y_{j}(X)$ such that $V_{i}(X) \geq \alpha \cdot V_{i}^{\ominus}(Y_{j}(X) \setminus \set{o})$. When $\alpha=1$, we simply refer to it as class envy-freeness up to one item (\PEF{}). 
\end{definition}

It is easy to verify that \PEF{} is weaker than \CEF{}. Intuitively, a class values its matching compared to the items assigned to another class if it has a pessimistic view of the items arrival and matched items, should the items were exchanged.
Clearly, a \cef{} matching is also \pef{}, and similarly \CEF{} implies \PEF{}.

\begin{example}
In the example given in Figure~\ref{fig:OEFvPEF}, there are two classes $N_1=\{a_1,a_2\}$ and $N_2=\{b_1,b_2\}$. The bold edges indicate the matched items. This matching is not \cef{}, since class $N_1$ envies class $N_2$ should it able to optimally match items $o_1$ and $o_3$ within its members.
However, the same matching is \pef{} because class $N_1$ considers a pessimal matching of the same items, that is $o_1$ and $o_3$, where item $o_1$ is matched to $a_{1}$ upon its arrival, and thus, $o_{3}$ remains unmatched (Since there is no edge from $a_2$ to $o_3$).
 \begin{figure}
    \centering
    \footnotesize
    \begin{tikzpicture}[scale=0.6]
        
        \node[draw, circle, fill=pink] (a1) at (-3, 0){$a_{1}$};
        \node[draw, circle, fill=pink] (a2) at (-3, -1.2){$a_{2}$};

        \node[draw, circle, fill=cyan] (b1) at (3, 0){$b_{1}$};
        \node[draw, circle, fill=cyan] (b2) at (3, -1.2){$b_{2}$};

        \node[draw, circle] (o1) at (0, 0){$o_{1}$};
        \node[draw, circle] (o2) at (0, -1.2){$o_{2}$};
        \node[draw, circle] (o3) at (0, -2.4){$o_{3}$};

        \path[-] (a1) edge (o1);
        \path[-, line width=0.5mm] (a1) edge (o2);
        \path[-] (a1) edge (o3);
        \path[-] (a2) edge (o1);
     
        \path[-, line width=0.5mm] (b1) edge (o1);
        \path[-, line width=0.5mm] (b2) edge (o3);
        \path[-] (b1) edge (o2);

    \end{tikzpicture}
    \caption{An allocation that is \pef{} but not \cef{}. The red group pessimally considers the worst-case matching of items $o_1$ and $o_3$ with the value of 1.}
    \label{fig:OEFvPEF}
\end{figure}
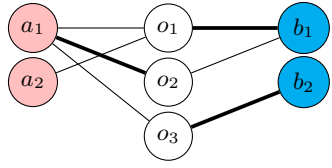

\end{example}

The following proposition strengthens our previous results on the incompatibility between non-wastefulness and \CEF{} by showing that non-wastefulness remains incompatible with a weaker fairness notion of \PEF{}.

\begin{proposition} \label{prop:CEF1_NW}
No deterministic algorithm for matching indivisible items can guarantee non-wastefulness and \PEF{}. 
\end{proposition}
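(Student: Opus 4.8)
The plan is to establish this stronger impossibility by reusing the very instance that already drives the \CEF{} lower bound in \Cref{thm:CEF1:impossibility}, namely the six‑agent, four‑item instance of \Cref{exm:CEF1vNW}, and to argue that weakening the valuation from optimistic ($V^*$) to pessimistic ($V^{\ominus}$) buys the algorithm nothing on that instance. Since \PEF{} is implied by \CEF{} (a minimum‑cardinality maximal matching never exceeds a maximum matching, so $V^{\ominus}_i \le V^*_i$), the substance of the claim is precisely that the incompatibility with non‑wastefulness \emph{survives} this weakening. Accordingly, almost all of the work lies in certifying that the pessimistic value of the envied bundle remains large enough to block even the relaxed notion.

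First I would replay the forcing argument of \Cref{exm:CEF1vNW}, but now verifying each intermediate envy condition in the pessimistic sense. Taking $o_1$ matched to $a_1 \in N_1$ without loss of generality, the goal is to show that any non‑wasteful algorithm that maintains \PEF{} is driven to the asymmetric configuration $Y_1 = \set{o_1}$ and $Y_2 = \set{o_2,o_3,o_4}$ (up to relabelling the classes). The point is that the relevant intermediate obstructions already fail pessimistically, not merely optimistically: for instance, were $o_2$ routed to $a_2 \in N_1$, then $N_2$ would hold nothing while $V^{\ominus}_2(\set{o_1,o_2}) = 2$, because $o_1$ and $o_2$ are liked by the \emph{distinct} agents $b_1$ and $b_2$, so even after deleting one item the pessimistic value is $1 > 0 = V_2(X)$. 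Non‑wastefulness then forces $o_4$ onto the unsaturated agent $b_1$, exactly as in the \CEF{} analysis.

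The crux — and the step I expect to be the main obstacle — is controlling $V^{\ominus}_1(Y_2 \setminus \set{o})$, since a minimum‑cardinality maximal matching can be far smaller than a maximum matching, and this gap is exactly the slack that a weaker fairness notion might exploit. I would close it by observing that, in the forced bundle $Y_2 = \set{o_2,o_3,o_4}$, the three items are liked inside $N_1$ by three distinct agents ($a_2$, $a_3$, and $a_1$, respectively), each having a single $N_1$‑neighbour; equivalently, the bipartite subgraph between $N_1$ and $Y_2$ is a disjoint union of edges. For such a graph every maximal matching already saturates all the involved items, so the minimum‑cardinality and maximum matchings coincide, and this remains true after deleting any one item. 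Hence $V^{\ominus}_1(Y_2 \setminus \set{o}) = V^{*}_1(Y_2 \setminus \set{o}) = 2$ for every $o \in Y_2$, while $V_1(X) = 1$, which violates \PEF{} and proves \Cref{prop:CEF1_NW}.

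Two points merit care in the write‑up. The disjoint‑edge (induced‑matching) structure is what makes the valuation robust: I must confirm that neither a single‑item removal nor an adversarial (pessimal) re‑matching can drag the value down to $1$, which is immediate precisely because the edges are vertex‑disjoint with no shared neighbours in $N_1$. The more delicate issue is ensuring the algorithm is genuinely \emph{forced} into this induced‑matching configuration rather than being allowed to rebalance the two classes; this is where the instance must be designed (or, if necessary, lightly augmented) so that every non‑wasteful continuation consistent with \PEF{} funnels the distinct‑agent‑liked items into one class while keeping that class's own value strictly smaller. Once that forcing is nailed down, the induced‑matching accounting above transfers the \CEF{} gap to \PEF{} verbatim.
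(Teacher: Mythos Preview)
Your proposal is correct and follows essentially the same approach as the paper: both reuse the instance of \Cref{exm:CEF1vNW} and exploit the fact that the bipartite graph between $N_1$ and $Y_2=\{o_2,o_3,o_4\}$ is a disjoint union of edges, so the pessimistic and optimistic values coincide. Your write‑up is actually more careful than the paper's very terse proof, since you explicitly re‑verify the intermediate forcing step ($o_2 \to b_2$) under \PEF{} rather than \CEF{} and spell out why the induced‑matching structure pins down $V^{\ominus}_1(Y_2 \setminus \{o\}) = 2$.
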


\begin{proof}
Consider the example given in \cref{fig::GEF_non}. It is easy to verify that the matching is non-wasteful. However, in this scenario the pessimal value of class $N_1$ for the items assigned to the class $N_2$ is $3$, implying that the matching is not \PEF{}.
\end{proof}

\subsection{Relationships Between \CEF{} and \CMMS{}}\label{sec:relation}

\begin{proposition}[\CEF{}+\NW{} $\nRightarrow$ \CMMS{}]
\label{prop:ef1NotCMMS}
Given an indivisible instance, a \CEF{}+\NW{} matching does not imply any \CMMS{\alpha} for any $\alpha > 0$.
\end{proposition}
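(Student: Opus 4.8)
The plan is to construct, for every $\alpha>0$, an explicit instance together with a matching $X$ that is \cef{} (hence \CEF{}) and \NW{}, but in which some distinguished class $1$ has $V_1(X)=1$ while $\mms_1$ is arbitrarily large, so that $V_1(X)<\alpha\cdot\mms_1$.

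Before building it, I would record the obstruction that shapes the construction: one cannot hope to make $V_1(X)=0$. Indeed, if class $1$ gets value $0$ under a non-wasteful $X$, then every item liked by $N_1$ must be matched to another class, and \CEF{} forces each other class to hold \emph{at most one} item liked by $N_1$ (removing that one item has to leave optimistic value $0$). With $k-1$ other classes this caps the number of $N_1$-liked items at $k-1$, whereas $\mms_1\ge 1$ would require $k$ such items, one per bundle of an MMS partition; hence $V_1(X)=0$ forces $\mms_1=0$ and the ratio is vacuous. So the construction must keep $V_1(X)=1$ and instead blow up $\mms_1$.

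Concretely, fix $n$ and take $k=n+1$ classes: a ``poor'' class $1=\{a_1,\dots,a_n\}$ and $n$ ``rich'' classes $B_1,\dots,B_n$, each $B_\ell=\{b_\ell^1,\dots,b_\ell^{n+1}\}$. For $\ell\in[n]$ and $m\in[n+1]$ introduce an item $o_{\ell,m}$ liked exactly by $a_\ell$ (in class $1$) and $b_\ell^m$ (in class $B_\ell$). The two relevant groupings of these $n(n+1)$ items are by \emph{first} coordinate, $R_\ell=\{o_{\ell,m}:m\in[n+1]\}$ (liked, within class $1$, only by $a_\ell$), and by \emph{second} coordinate, $D_m=\{o_{\ell,m}:\ell\in[n]\}$ (liked by the $n$ distinct agents $a_1,\dots,a_n$). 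The matching $X$ I would exhibit matches a single poor agent, $a_1\mapsto o_{1,1}$, and routes every other item to the rich agent that likes it, $o_{\ell,m}\mapsto b_\ell^m$ for all $(\ell,m)\neq(1,1)$. Then $V_1(X)=1$, $X$ is non-wasteful, and each other class's bundle is contained in some $R_\ell$, so its optimistic value to class $1$ is at most $1$; thus $X$ is in fact \cef{} (every class-$1$ envy vanishes without even removing an item, and the rich classes envy no one since their bundles are liked only within their own class). For the lower bound on $\mms_1$ I would use the transversal partition $\{D_1,\dots,D_{n+1}\}$: assigning $D_1$ to class $1$ and $D_m$ to $B_{m-1}$ (placing items on arbitrary agents of the target class, which the matching definition permits) gives $V_1^*(D_m)=n$ for every bundle, so $\mms_1\ge n$.

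Putting the pieces together, $V_1(X)=1$ and $\mms_1\ge n$, so choosing $n>1/\alpha$ yields a \cef{}$+$\NW{} matching that violates \CMMS{\alpha}; since $\alpha>0$ was arbitrary, no positive approximation is implied. The main obstacle I anticipate is reconciling the two conflicting demands on the $N_1$-liked items: \CEF{} needs them to \emph{concentrate} (each rival bundle should have small matching number into class $1$), while a large $\mms_1$ needs them to \emph{spread} (each MMS bundle should match many distinct agents of class $1$). The design resolves this by making each item liked by a single poor agent and then routing by the first coordinate in $X$ (concentration) but by the second coordinate in the MMS partition (spread); verifying \NW{} and the all-pairs \CEF{} conditions for the rich classes is the routine but careful part.
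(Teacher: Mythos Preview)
Your construction is correct and your verification of \CEF{}, \NW{}, and $\mms_1\ge n$ goes through as written. It is worth highlighting that your preliminary observation---that under \CEF{} and \NW{} one \emph{cannot} have $V_1(X)=0$ while $\mms_1\ge 1$---is not just motivation but actually pinpoints a flaw in the paper's own proof.

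The paper's construction gives the distinguished class (there called $N_k$) value \emph{zero}, and argues \CEF{} from the fact that $V_k^*(Y_j)\le 1$ for each other class $j$. But \CEF{} with $V_k(X)=0$ demands an item $o\in Y_j$ with $V_k^*(Y_j\setminus\{o\})=0$, i.e., \emph{no} remaining item in $Y_j$ is liked by any agent of $N_k$. In the paper's instance each $Y_j=C_j$ contains $k\ge 2$ items all liked by the same agent $a_j\in N_k$, so removing one item still leaves $V_k^*(Y_j\setminus\{o\})=1>0$; the matching is not \CEF{}. Your argument shows this is unavoidable: with $V_1(X)=0$, each rival class can hold at most one $N_1$-liked item, capping the total at $k-1$ and forcing $\mms_1=0$.

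Structurally your example is close in spirit to the paper's (a poor class whose agents each like one ``row'' of items, rich classes absorbing those rows, and a ``column'' transversal witnessing large $\mms$), but you correctly give the poor class a single item so that $V_1(X)=1$ rather than $0$. This one-item concession is precisely what repairs the argument, and your analysis is cleaner for it: the resulting matching is in fact \cef{} (not merely \CEF{}), and the ratio $1/n$ can be driven below any $\alpha>0$.
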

\begin{proof}
We construct an instance for which a \CEF{\alpha}+\NW{} matching with $\alpha = 1$ gives only a \CMMS{0} approximation.

Suppose there are $k$ classes $N_1,N_2,\ldots,N_k$. Each $N_i$ for $i \in [k-1]$ consists of $k$ agents. The last class $N_k$ consists of $k-1$ agents $a_1,a_2,\ldots,a_{k-1}$. 
There are $k(k-1)$ items that are partitioned into $k-1$ subsets $C_1,C_2,\ldots,C_{k-1}$. For $j \in [k-1]$, $C_j$ consists of $k$ items, $o_{1j},o_{2j},\ldots,o_{kj}$, each of which is referred to as a {\em type $j$} item. 
For each $j \in [k-1]$, every agent in class $N_j$ likes every item in $C_j$. 
For class $N_k$, each agent $a_j$ for $j\in [k-1]$ likes every item in $C_j$. For example, agent $a_2$ likes $k$ items $o_{12},o_{22},\ldots,o_{k2}$ but does not like none of the other items. 

\begin{figure*}
    \centering
    \begin{tikzpicture}[scale=0.6, every node/.style={scale=0.6}]
    \tikzstyle{A} = [draw=black,ultra thick,fill=white,isosceles triangle,isosceles triangle apex angle=60,rotate=90,minimum size=15pt]
    \tikzstyle{B} = [draw=black,ultra thick,fill=white,circle,minimum size=15pt]
    \tikzstyle{C} = [draw=black,ultra thick,fill=white,minimum size=15pt]
    \tikzstyle{brace} = [decorate,decoration={calligraphic brace,amplitude=8pt},ultra thick]
    \draw[dashed,thick] (-.8,-.6) rectangle +(4.6,2) node[above=5pt,midway] {\LARGE class $k$};
    \node[A] (A0) at (0,0) {};
    \node[B] (B0) at (1,.1) {};
    \node at (2,.1) {\textbf{\ldots}};
    \node[C] (C0) at (3,.1) {};
    \draw[dashed,thick] (4.2,-.6) rectangle +(4.6,2) node[above=5pt,midway] {\LARGE class $1$};
    \node[A] (A1) at (5,0) {};
    \node[A] (A2) at (6,0) {};
    \node at (7,.1) {\textbf{\ldots}};
    \node[A] (Ak) at (8,0) {};
    \draw[brace] (4.2,1.8) --  +(4.6,0) node[above=10pt,midway] {\LARGE $k$ agents};
    \draw[dashed,thick] (9.2,-.6) rectangle +(4.6,2) node[above=5pt,midway] {\LARGE class $2$};
    \node[B] (B1) at (10,.1) {};
    \node[B] (B2) at (11,.1) {};
    \node at (12,.1) {\textbf{\ldots}};
    \node[B] (Bk) at (13,.1) {};
    \draw[brace] (9.2,1.8) --  +(4.6,0) node[above=10pt,midway] {\LARGE $k$ agents};
    \node at (15,.1) {\LARGE \textbf{\ldots}};
    \draw[dashed,thick] (16.2,-.6) rectangle +(4.6,2) node[above=5pt,midway] {\LARGE class $k-1$};
    \node[C] (C1) at (17,.1) {};
    \node[C] (C2) at (18,.1) {};
    \node at (19,.1) {\textbf{\ldots}};
    \node[C] (Ck) at (20,.1) {};
    \draw[brace] (16.2,1.8) --  +(4.6,0) node[above=10pt,midway] {\LARGE $k$ agents};
    %
    \draw[dashed,thick] (4.2,-6.2) rectangle +(4.6,2) node[below=4pt,midway] {\LARGE type $1$};
    \node[A] (a1) at (5,-5) {};
    \node[A] (a2) at (6,-5) {};
    \node at (7,-4.9) {\textbf{\ldots}};
    \node[A] (ak) at (8,-5) {};
    \draw[brace] (8.8,-6.6) --  +(-4.6,0) node[below=10pt,midway] {\LARGE $k$ items};
    \draw[dashed,thick] (9.2,-6.2) rectangle +(4.6,2) node[below=4pt,midway] {\LARGE type $2$};
    \node[B] (b1) at (10,-4.9) {};
    \node[B] (b2) at (11,-4.9) {};
    \node at (12,-4.9) {\textbf{\ldots}};
    \node[B] (bk) at (13,-4.9) {};
    \draw[brace] (13.8,-6.6) --  +(-4.6,0) node[below=10pt,midway] {\LARGE $k$ items};
    \node at (15,-4.9) {\LARGE \textbf{\ldots}};
    \draw[dashed,thick] (16.2,-6.2) rectangle +(4.6,2) node[below=4pt,midway] {\LARGE type $k-1$};
    \node[C] (c1) at (17,-4.9) {};
    \node[C] (c2) at (18,-4.9) {};
    \node at (19,-4.9) {\textbf{\ldots}};
    \node[C] (ck) at (20,-4.9) {};
    \draw[brace] (20.8,-6.6) --  +(-4.6,0) node[below=10pt,midway] {\LARGE $k$ items};
    \draw[thick] (A1) -- (a1);
    \draw[thick] (A2) -- (a2);
    \draw[thick] (Ak) -- (ak);
    \draw[thick] (B1) -- (b1);
    \draw[thick] (B2) -- (b2);
    \draw[thick] (Bk) -- (bk);
    \draw[thick] (C1) -- (c1);
    \draw[thick] (C2) -- (c2);
    \draw[thick] (Ck) -- (ck);
\end{tikzpicture}
    \caption{A \CEF{}+\NW{} matching that does not imply any approximation for \CMMS{}.}
    \label{fig:CEF1vMMS}
\end{figure*}
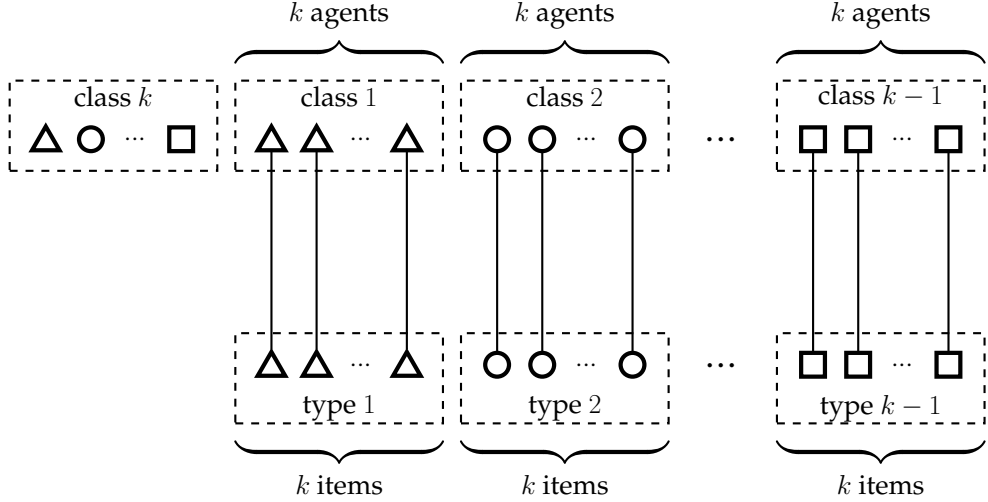

Now, consider a matching $X$ that gives no item to class $N_k$ and matches arbitrarily each of the $k$ items in $C_j$ to one of the $k$ agents in each class $N_j$ for $j \in [k-1]$ (as illustrated in \cref{fig:CEF1vMMS}). Since each of the $k(k-1)$ items are fully assigned to an agent who likes it, the matching $X$ is clearly non-wasteful. Further, this matching is \CEF{}. In fact, all classes except $N_k$ receive a perfect matching and are not envious of any other class. Also, for $j \in [k-1]$, there is at most one agent $a_j$ in $N_k$ who likes an item in $C_j$. Thus, class $N_k$ is not envious for more than one item since $V^{*}_{k}(Y_j) \leq 1$ for any $j\in [k-1]$. Thus, the matching is \CEF{}.

In contrast, consider a partition $(L_1,L_2,\ldots,L_k)$ of the items where $L_i=\{o_{i1},o_{i2},\ldots,o_{ik-1}\}$ for each $i \in [k]$. Observe that for each $i=1,2,\ldots,k$, each agent $a_j$ in $N_k$ likes exactly one item $o_{ij}$ in $L_i$, i.e., $L_i \cap C_j = \{o_{ij}\}$ for $j \in [k-1]$. This means that there is a perfect matching of size $k-1$ between $N_k$ and the items of each $L_i$, yielding $V^*_1(L_i) \geq k-1$ for $i \in [k]$. We thus establish that $\mms_k \geq k-1$. Given that class $N_1$'s value for $X$ is $V_1(X) = 0$, $X$ provides \CMMS{0} approximation, which proves the claim.  
\end{proof}

\begin{proposition}[\CMMS{} $\nRightarrow$ \CEF{}+\NW{}] \label{prop:cmmsNotef1}
Given an indivisible instance, a \CMMS{} matching does not imply \CEF{\alpha} for any $\alpha > 0$.
\end{proposition}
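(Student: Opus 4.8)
The plan is to exhibit an explicit instance together with an indivisible matching $X$ that is exactly class maximin share fair (i.e.\ \CMMS{}, the case $\alpha=1$) yet fails \CEF{\alpha} for \emph{every} $\alpha>0$. So this is a pure counterexample construction, and the work is entirely in choosing the instance so that two a priori conflicting demands can coexist.

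First I would isolate the structural conditions such a witness must meet. Observe that $X$ can fail \CEF{\alpha} for every $\alpha>0$ only if the offended class $i$ has value exactly $0$: if $V_i(X)>0$, then taking $\alpha$ small enough restores the inequality $V_i(X)\ge \alpha\cdot V^*_i(Y_j\setminus\{o\})$. So I need a class $i$ with $V_i(X)=0$ that is still envious of some class $j$, and the envy must survive deleting any single item; this forces the bundle $Y_j$ to contain at least \emph{two} items liked by class $i$, so that $V^*_i(Y_j\setminus\{o\})\ge 1>0$ for all $o\in Y_j$. Simultaneously, \CMMS{} demands $\mms_i\le V_i(X)=0$, so I must engineer $\mms_i=0$. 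The device for this is a pigeonhole argument: take $k\ge 3$ classes while letting class $i$ like only two items. Then in any indivisible matching the two liked items occupy at most two of the $k$ class-bundles, leaving some bundle with no item liked by class $i$; hence $\min_{j}V^*_i(Y_j)=0$ for \emph{every} matching, so $\mms_i=0$.

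Concretely I would take three classes $N_1=\{a_1\}$, $N_2=\{b_1,b_2\}$, $N_3=\{c_1\}$ and two items $o_1,o_2$, with $a_1$ liking both items, $b_1$ liking $o_1$, $b_2$ liking $o_2$, and $c_1$ liking nothing, and let $X$ match $o_1\to b_1$ and $o_2\to b_2$. The verification then splits into two independent checks. For \CMMS{}: the pigeonhole argument gives $\mms_1=\mms_2=0$, and $\mms_3=0$ since $c_1$ likes no item; as $V_1(X)=0$, $V_2(X)=2$, $V_3(X)=0$, every inequality $V_j(X)\ge\mms_j$ holds, so $X$ is \CMMS{}. For the failure of \CEF{}: inspect the pair $(1,2)$, where $Y_2=\{o_1,o_2\}$ and $V_1(X)=0$, while $V^*_1(Y_2\setminus\{o_1\})=V^*_1(Y_2\setminus\{o_2\})=1$; hence for every $\alpha>0$ and every removable item the required inequality $0\ge\alpha\cdot 1$ fails, so $X$ is not \CEF{\alpha}.

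The one genuine tension—and the thing to get right—is reconciling the two opposing demands on class $1$: the envy must survive removing any single item (forcing at least two liked items in $Y_2$), yet class $1$ must have maximin share $0$ (so that $V_1(X)=0$ does not violate \CMMS{}). With only two classes these are irreconcilable, since class $1$ could place one liked item in each bundle and secure $\mms_1\ge 1$. Introducing a third, essentially dummy, class is exactly what breaks the deadlock, so the crux is simply taking $k$ strictly larger than the number of items the offended class likes.
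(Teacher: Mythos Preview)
Your proof is correct and follows essentially the same approach as the paper: both exploit the pigeonhole observation that with more classes than items every class has $\mms_i=0$, while concentrating at least two liked items in a single class leaves an empty-handed class whose envy cannot be cured by removing one item. The paper uses the symmetric family with $k$ classes of $k$ agents each and $k-1$ universally liked items all assigned to one class; your instance with three classes and two items is effectively the minimal member of this family (with a dummy third class), and your write-up has the added merit of explaining \emph{why} two liked items and at least three classes are both necessary.
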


\begin{proof}
Consider an instance with $k$ classes each with $k$ agents. There are $k-1$ items liked by every agent in each class. A matching that assigns all $k-1$ of items to a single class, say $N_1$, satisfies \CMMS{}. This is because the \CMMS{} value for each class is obtained by partitioning the $k-1$ items into $k$ bundles, yielding $\mms_i=0$ for $i=1,2,\ldots,k$.
However, this matching is not \CEF{} (nor any $\alpha$ approximation of it for $\alpha > 0$) because every class values the matching assigned to $N_1$ as $k-1$ while only receiving $0$ valuation.
\end{proof}

\section{Omitted Material from Section~\ref{sec:divisible}}\label{app:divisible}
\subsection{Proportionality with respect to allocated items}
Our objective of this section is to show that $(\e)$-bound is tight even for \CPROP{} with respect to the allocated items. Formally, we define the {\em proportional share} of class $i$ with respect to a set $S$ of items as  
\[
\prop^S_i= \max_{X \in \mathcal{X}(S)} \min_{j \in [k]} V^*_i(Y_j(X)). 
\]
where $\mathcal{X}(S)$ is the set of (divisible) matchings of the set of items $S$ to the set of agents $N$. For $\alpha \in (0,1]$, we say that matching $X$ is $\alpha$-{\em class proportional} (\CPROP{\alpha}) with respect to  a set $S$ of items if for every class $i \in [k]$, $V_i(X) \geq \alpha \cdot \prop^S_i$. For $\alpha \in (0,1]$, a deterministic online algorithm for matching divisible items is $\alpha$-{\em class proportional} (\CPROP{\alpha}) {\em with respect to the allocated items} if when all items have arrived, it produces a matching that is $\alpha$-{class proportional} with respect to the items that have been fully assigned by the algorithm. 

\begin{proposition}\label{prop:impossibility:prop:allocated}
No deterministic algorithm for matching divisible items satisfies \CPROP{\alpha} with respect to the allocated items for any $\alpha > \e$. 
\end{proposition}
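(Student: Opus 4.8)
The plan is to construct an explicit family of instances, parameterized by a size $n$, on which any deterministic online algorithm for divisible items must fall short of $(\e)$-approximation of \CPROP{} with respect to allocated items, with the gap vanishing as $n \to \infty$. The key difficulty, relative to the single-class construction in \Cref{thm:divupperbound}, is that with a single class the allocated-items version of the proportional share is trivially satisfied (the class receives everything it is allocated), so the instance must involve at least two classes and must force an \emph{asymmetry} in how much one class can hope to receive. The natural idea is to mimic the classical hard instance underlying the $\e$ upper bound for online (fractional) matching, but embed it so that the proportional share of one class---computed only over the items that were actually allocated---remains large while that class's realized value is forced to be small.

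First I would set up an instance with two classes, a distinguished class $N_1$ whose proportionality we will violate, and a second class $N_2$ used only to absorb items and create competition. I would have items arrive in $n$ phases mirroring the upstairs-downstairs structure of the Karp--Vazirani--Vazirani / Kalyanasundaram--Pruhs lower bound: in each phase a batch of items arrives liked by a shrinking set of still-available agents of $N_1$ (together with some agents of $N_2$), so that the water-filling-type constraints force the algorithm to spread $N_1$'s allocation thinly. Because the adversary reveals edges online, whatever fractions the algorithm commits to $N_1$ in early phases cannot be undone; the standard analysis then shows that the \emph{total} value $V_1(X)$ that $N_1$ can accumulate across all phases is at most roughly $(\e)$ times what an omniscient allocation of the same items could have given $N_1$. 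The crux is to arrange the final phase so that, in hindsight, the set $S$ of all allocated items admits a partition into $k$ bundles each of which $N_1$ values at (essentially) $1$, forcing $\prop_1^S \to 1$ while $V_1(X) \to \e$.

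The main obstacle I anticipate is controlling $\prop_1^S$, the proportional share over \emph{allocated} items, rather than over all items. Unlike the all-items version, here $S$ itself depends on the algorithm's behavior, so I cannot fix the benchmark in advance; I must argue that regardless of the algorithm's choices, non-wastefulness forces \emph{enough} items to be fully allocated that $N_1$ could partition them favorably. I would handle this by making the instance so that essentially all presented items are liked by some agent and hence must be fully assigned under any non-wasteful algorithm (and by \Cref{prop:MMS:NW} we may restrict attention to non-wasteful algorithms without loss), pinning down $S$ up to lower-order terms. The remaining work is the bookkeeping: lower-bounding $\prop_1^S$ by exhibiting an explicit near-balanced partition of $S$ among the $k$ classes in which $N_1$'s optimistic value of each bundle is close to $\prop_1$, and upper-bounding $V_1(X)$ by the online matching potential argument. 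I would finish by taking the ratio $V_1(X)/\prop_1^S$, showing it is at most $\e + o(1)$, and conclude that no $\alpha > \e$ is achievable. The delicate point throughout is ensuring the two bounds are computed against the \emph{same} realized item set $S$, which is exactly what distinguishes this proof from the easy single-class argument and makes it, as the remark warns, considerably more intricate.
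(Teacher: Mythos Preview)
Your plan is essentially the paper's approach: two classes, an adaptive KVV/Kalyanasundaram--Pruhs style adversary that successively shrinks the set of ``active'' agents in $N_1$, a large pool of absorbing agents in $N_2$ so that non-wastefulness forces full assignment of (almost) every item, the reduction to non-wasteful algorithms via \Cref{prop:MMS:NW}, an explicit balanced partition of the allocated set $S$ to lower-bound $\prop_1^S$, and the harmonic-sum potential argument to upper-bound $V_1(X)$ by roughly $(\e)\,n$.

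Two concrete details you should not leave implicit when you write this out. First, the adaptivity rule matters: in the paper the adversary, after each round, removes from the active set the \emph{lowest-valued} agent in each class; this is what drives the harmonic recursion $V^t(\text{removed}) \le \sum_{s \le t} \frac{2s}{n-s+1}$ and hence the $(\e)n$ bound on the stopping round $t^*$. Your sketch gestures at ``shrinking sets'' but does not say how they shrink; without this specific rule the potential argument does not go through. Second, the paper needs a \emph{second stage} after round $t^*$: additional items liked by $N_2$ and by the already-saturated active agents of $N_1$, which pad $S$ up to $\approx 2n$ items while contributing nothing to $V_1(X)$. This stage is what makes $\prop_1^S \ge n-1$ rather than merely $\approx t^*$, and it is exactly the ``final phase'' you allude to but do not specify. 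With those two pieces made explicit your argument is the paper's.
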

\begin{proof}
We will prove that no deterministic online algorithm satisfying non-wastefulness can achieve \CPROP{\alpha} with respect to the allocated items for any $\alpha > \e$. By the proof of~\Cref{prop:MMS:NW}, this implies that no deterministic algorithm can guarantee \CPROP{\alpha} with respect to the allocated items for any $\alpha > \e$. 

Take any non-wasteful algorithm for divisible item allocation and consider the following adversarial instance. There are two classes of $3n$ agents each, $N_1 = \{a_1,\ldots,a_n,d_1,\ldots,d_{2n}\}$ and $N_2 = \{a'_1,\ldots,a'_n,d'_1,\ldots,d'_{2n}\}$. We call the agents $d_1,\ldots,d_{2n},d'_1,\ldots,d'_{2n}$ \emph{dummy agents}. There are $2n$ items, labeled $o_i$ and $o'_i$ for $i \in [n]$.

The construction of the instance works in rounds as follows.
\begin{itemize}
    \item We start with $t=1$, $R^0_1 = \{a_1,a_2,\ldots,a_n\}$, and $R^0_2 = \{a'_1,a'_2,\ldots,a'_n\}$.
    \item In round $t$, items $o_t$ arrives, followed immediately by item $o'_t$. Both these items are liked by agents in $R^{t-1}_1$ and $R^{t-1}_2$. 
    \item Let $V^t(a)$ denote the value that agent $a$ derives at the end of round $t$ when the algorithm finishes allocating both items. Find the lowest valuation agent in each class. WLOG, say $a_t \in \argmin_{a \in R^{t-1}_1} V^t(a)$ and $a'_t \in \argmin_{a' \in R^{t-1}_2} V^t(a')$. Set $R^t_1 \gets R^{t-1}_1 \setminus \{a_t\}$, $R^t_2 \gets R^{t-1}_2 \setminus \{a'_t\}$, and $t \gets t+1$.
\end{itemize}

We stop this process after the first round $t^*$ such that at the end of that round every agent in $R^{t^*}_1$ and every agent in $R^{t^*}_2$ is fully saturated. 

Without loss of generality, assume that at the end of round $t^*$, the total value of agents in $N_1$ is at most the total value of agents in $N_2$, i.e., $\sum_{a \in N_1} V^{t^*}(a) \le \sum_{a' \in N_2} V^{t^*}(a')$. For shorthand, let us denote $V^t(A) = \sum_{a \in A} V^t(a)$ for a set of agents $A$. 

Then, the remaining $2(n-t^*)$ items that arrive are liked by agents in $N_2 \cup R^{t^*}_1$. Note that by non-wastefulness and by the fact that $N_2$ contains $2n$ dummy agents, the $2(n-t^*)$ items are fully assigned to some agent. 

We claim the following properties at the end of round $t^*$.
\begin{itemize}
    \item The agents $n-t^*$ agents in $R^{t^*}_1$ and the $n-t^*$ agents in $R^{t^*}_2$ are all fully saturated.
    \item $V^{t^*}(N_1) \le t^*$, $V^{t^*}(N_2) \ge t^*-1$.
    \item $t^* \le (1-1/e) \cdot n$ (in particular, the process will stop after no more than $n$ rounds). 
\end{itemize}

The first claim follows immediately due to the definition of $t^*$. 
For the second claim, note that the total value of both classes after $t$ rounds must be at most $2t$ since only $2t$ items have arrived. Also, the total value of both classes after $t$ rounds must be at least $2(t-1)$; this is because the $t$th round only happens if some agent in $R^{t-1}_1 \cup R^{t-1}_2$ was not fully saturated after $t-1$ rounds, and since this agent was part of $R^{t'}_1 \cup R^{t'}_2$ for all $t' \le t-1$, non-wastefulness implies that the algorithm must have assigned the $2(t-1)$ items from the first $t-1$ rounds fully. These two claims, along with the convention that $V^{t^*}(N_1) \le V^{t^*}(N_2)$ implies the second claim. 

Before we prove the third claim, we show why these claims imply the desired bound on the envy ratio. 
At the end of the algorithm, the total value of class $N_1$ is at most $t^*$ because of the second claim and the fact that they do not receive any items from the last $2(n-t^*)$ items (as all agents in $R^{t^*}_1$ are saturated after round $t^*$). 

In contrast, the proportional fair share $\prop^S_1$ of class $N_1$ with respect to the allocated items $S$ is at least $n-1$. 
Note that all the items except for $o_{t^*}$ and $o'_{t^*}$ are fully assigned. Thus, $M\setminus \{o_{t^*},o'_{t^*}\} \subseteq S$. Further, consider two sets $P_1=\{o_1,\ldots,o_{t^*-1},o_{t*+1},\ldots,o_n\}$ and $P_2=\{o'_1,\ldots,o'_{t^*-1},o'_{t*+1},\ldots,o'_n\}$. From $P_1$, the $t^*-1$ items $o_1,o_2,\ldots,o_{t*-1}$ can be matched to $t^*-1$ agents $a_1,a_2,\ldots,a_{t*-1}$ and the remaining $n-t^*-$ items can be matched to $n-t^*$ agents in $R^{t^*}_1$. Similarly, from $P_2$, $t^*-1$ items $o'_1,o'_2,\ldots,o'_{t*-1}$ can be matched to $t^*$ agents $a'_1,a'_2,\ldots,a'_{t*-1}$ and the remaining $n-t^*$ items can be matched to $n-t^*$ agents in $R^{t^*}_1$. Thus, $\prop^S_1 \geq n-1$. From the third claim, if $V_1(X)  \geq \alpha \cdot \prop^S_1$, then  $(1-1/e)n  \geq \alpha (n-1)$, meaning that $(1-1/e) \frac{n}{n-1} \geq \alpha$.

Finally, we show that $t^* \le (1-1/e) \cdot n$. To see this, we first show that after $t$ rounds, 
\[
V^t(N_1 \setminus R^t_1) + V^t(N_2 \setminus R^t_2) \le \frac{2t}{n} + \frac{2(t-1)}{n-1} + \ldots + \frac{2\cdot 1}{n-t+1}.
\]

For the base case, note that after the first round, $V^1(a_1)+V^1(a'_1) \le 2/n$ follows from the pigeonhole principle. Suppose this claim holds after $t-1$ rounds. Then, after round $t$, we have 
\[
V^t(a_t)+V^t(a'_t) \le \frac{2t - (V^{t-1}(N_1 \setminus R^{t-1}_1)+V^{t-1}(N_2 \setminus R^{t-1}_2))}{n-t+1}.
\]
Adding $V^{t-1}(N_1 \setminus R^{t-1}_1)+V^{t-1}(N_2 \setminus R^{t-1}_2) = V^{t}(N_1 \setminus R^{t-1}_1)+V^{t}(N_2 \setminus R^{t-1}_2)$ to both sides, we obtain
\[
V^t(N_1 \setminus R^t_1) + V^t(N_2 \setminus R^t_2) \le \frac{2t}{n-t+1} + \frac{n-t}{n-t+1} \cdot (V^{t-1}(N_1 \setminus R^{t-1}_1)+V^{t-1}(N_2 \setminus R^{t-1}_2)).
\]

Using the induction hypothesis, we get the desired result. 
Consider the smallest $\hat{t}$ such that
\[
2\hat{t} - 2 - \left(\frac{2\hat{t}}{n} + \frac{2(\hat{t}-1)}{n-1} + \ldots + \frac{2\cdot 1}{n-\hat{t}+1}\right) \ge 2(n-\hat{t}).
\]
Note that the process must stop at $t^* \le \hat{t}$. This is because the total value of both classes after $\hat{t}$ round is at least $2\hat{t}-2$, but the value to the removed agents is at most the expression in the brackets. Hence, the remaining allocation must have saturated the remaining $2(n-\hat{t})$ agents. 
After simple algebra, we can see that the left hand side is equal to $2 \cdot (n-\hat{t}) \cdot (H_n-H_{n-\hat{t}}) - 2$. If this is at least $2(n-\hat{t})$, then $H_n-H_{n-\hat{t}} \ge 1+1/(n-\hat{t})$. The smallest $\hat{t}$ when this is satisfied is roughly $(1-1/e) \cdot n+o(n)$. 
\end{proof}

\subsection{Upper bounds for Algorithm~\ref{alg:divisible}}
\divisibleUpperBound*

\begin{proof}
The fact that Algorithm~\ref{alg:divisible} cannot achieve \CPROP{\alpha} for $\alpha > \e$ immediately follows from Theorem~\ref{thm:divupperbound}.

For each of the fairness or efficiency guarantees, we provide an instance for which Algorithm \ref{alg:divisible} cannot achieve the corresponding bound. 

\paragraph{\cef{}}
Consider the following instance with two classes $N_1=\{a_1,a_2,\ldots,a_n\}$ and $N_2=\{a'_1,a'_2,\ldots,a'_{2n}\}$. There are $2n$ items $o_1,o'_1,o_2,o'_2, \ldots,o_n,o'_n$. There are $n$ rounds: in round $t \in [n]$, item $o_t$ arrives, followed immediately by item $o'_t$. Each agent $a'_i$ $(i \in [2n])$ likes every item. Each agent $a_i$ $(i \in [n])$ likes the items $o_t$ and $o'_t$ with $t=1,2,\ldots,i$; namely, agent $a_1$ likes the items $o_1,o'_1$, agent $a_2$ likes items $o_1,o'_1,o_2,o'_2$, and so on. 

Note that since $N_2$ has $2n$ agents who like all $2n$ items, for each item, there is at least one agent in $N_2$ who is not saturated and likes that item. Thus, until the agents in $N_1$ who like $o_t$ and $o'_t$ are fully saturated, the equal-filling algorithm splits the item into halves among the two classes. The algorithm assigns the amount $\frac{1}{2n}$ of $\{o_t,o'_t\}$ to each agent in $N_2$. On the other hand, it assigns the amount $\frac{1}{n-(t-1)}$ of $o_t$ and $o'_t$ to each agent $i$ of class $N_1$ with $i \geq j$; for example, agent $a_1$ receives $\frac{1}{n}$ of $\{o_1,o'_1\}$; agent $a_2$ receives $\frac{1}{n}$ of $\{o_1,o'_1\}$ and $\frac{1}{n-1}$ of $\{o_2,o'_2\}$; agent $a_3$ receives $\frac{1}{n}$ of $\{o_1,o'_1\}$, $\frac{1}{n-1}$ of $\{o_2,o'_2\}$, and $\frac{1}{n-2}$ of $\{o_3,o'_3\}$; and so on. 

Let $X$ denote the matching returned by Algorithm~\ref{alg:divisible}. We will establish that $V_1(X) \leq (1-1/e) V^*_1(Y_2)$. 
First, it is not difficult to see that under $X$, class $N_2$ is assigned to at least $1$ for each item set of $\{o_t,o'_t\}$ ($t \in [n]$). Thus, $V^*_1(Y_2) \geq n$. Now, let $t^*=n-\lceil \frac{n}{e}\rceil$. 
It can be easily checked by the integral test that $\sum^{t^*}_{t=1}\frac{1}{n+1-t}$ is between $1-\frac{5}{n}$ and $1$. Thus, after the algorithm assigns $o_{t^*+5},o'_{t^*+5}$, the set $N_{1,o_t}$ becomes empty, i.e., there is no agent in $N_1$ who is not saturated and likes new items $o_{t},o'_{t}$ for $t > t^*+5$.
Thus, the value $V_1(X)$ derived by class $N_1$ from $X$ is at most 
\[
t^*+5 < (1-\frac{1}{e})n +5 \leq (1-\frac{1}{e})V^*_1(Y_2) +5,
\]
which proves the claim. 

\paragraph{\USW{}}
Let $n$ be a positive integer. Consider $n+1$ classes: There are $n$ classes $N_j$, each of which consists of a single agent $c_j$ for $j=1,2,\ldots,n$. The last class $N_{n+1}$ consists of $n$ agents $\{a_1,a_2,\ldots,a_n\}$. There are $2n$ items: $n$ red items $r_1,r_2,\ldots,r_n$ and $n$ blue items $b_1,b_2,\ldots,b_{n}$. Each red item is liked by every agent. Each blue item $b_i$ is liked by the single agent $c_{i}$ in $N_{i}$. Now the instance admits a perfect matching of size $2n$ that matches every agent $c_i$ for $i \in [n]$ to the blue item $b_{i}$ and the remaining $n$ agents in $N_{n+1}$ arbitrarily to the remaining $n$ red items. 

Now suppose that the items arrive in the order of $r_1,r_2,\ldots,r_n, b_1,b_2,\ldots,b_{n}$.
For each red item $r_i$ $(i \in [n])$, the equal-filling algorithm assigns an equal amount $\frac{1}{n+1}$ of fractions among the $n+1$ classes. Thus, after the algorithms matches the last red item $r_n$, the total amount of fractions each class $N_i$ for $i \in [n+1]$ has received is $\frac{n}{n+1}$. For each blue item $b_i$ $(i \in [n])$, the equal-filling algorithm assigns an amount of $\frac{1}{n+1}$ to the agent $c_i$ in $N_i$ since $c_i$ is the only agent who likes the blue item $b_i$ but has already been saturated up to $\frac{n}{n+1}$. 
Thus, the utilitarian social welfare of the resulting matching $X$ is given as follows: 
\[
\sum^n_{i=1}V_i(X)+V_{n+1}(X)=\sum^n_{i=1}1+ \frac{n}{n+1} = n+\frac{n}{n+1}. 
\]
This proves the claim. 
\end{proof}

\section{Omitted Material from Section~\ref{sec:randomized}} \label{app:randomized}

\subsection{Discussion on Other Randomized Algorithms}

Readers familiar with the online matching literature may wonder why can't we use the Ranking algorithm of \citet{karp1990optimal} to decide how to match items \emph{within} each class, and combine it with some fair class-level matching approach.
While we believe this is an interesting direction for future research, there is a concrete technical difficulty in analyzing such algorithms.
Naturally, the class-level matching must take into account which agents are already matched to previous items.
This means that the realization of randomness used by Ranking within some class $i$ will influence what items are allocated to the class!

How about applying Ranking directly, ignoring how agents are partitioned into classes?
While this approach circumvents the above challenge, it fails on two classes with lopsided sizes.
In the extreme, consider a class with only one agent, and another class with $n \gg 1$ agents, and only one item.
The second class will get the item with probability $\frac{n}{n+1}$ while the first class gets it only with probability $\frac{1}{n+1}$.

Finally, we observe that it is necessary to have randomness in both the class-level matching and the individual-level matching, in order to exploit the power of randomized algorithms.

\begin{proposition}
    If an algorithm assigns deterministically at the class-level, it is at best \CPROP{\frac{1}{2}}.
\end{proposition}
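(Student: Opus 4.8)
The plan is to construct a single adversarial instance, tailored to whatever deterministic class-level rule the algorithm uses, that forces some class to receive expected value no more than $\half$ of its proportional share. The key leverage is that if the class-level assignment $(\tilde y_{i,o})_{i\in[k]}$ is fixed deterministically upon each item's arrival, then the adversary can observe these choices and respond, exactly as in the lower-bound constructions of \Cref{thm:CEF1:impossibility} and \Cref{thm:divupperbound}. The individual-level matching may still be randomized (e.g.\ via OCS), but since proportionality only concerns the total expected value $\E[V_i(X)]=\sum_{a\in N_i}\Pr[a\text{ matched}]$ of each class, and the total mass a class can be matched is capped by the total class-level fraction it is assigned, the randomness within a class cannot boost a class's value beyond the deterministic class-level budget it was handed.

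First I would set up two classes $N_1,N_2$ of equal (large) size and feed in items liked by all agents of both classes, one at a time. Because the class-level split is deterministic, after each item the adversary knows the cumulative fraction $\sum_o \tilde y_{1,o}$ and $\sum_o \tilde y_{2,o}$ routed to each class. Mimicking the construction in the \cef{} part of \Cref{thm:divupperbound}, I would arrange an early phase of ``shared'' items followed by a phase of items liked by only one class, chosen (by symmetry, WLOG the class that received the larger early share) so that the class with the smaller class-level budget becomes permanently starved while the other class's agents saturate. The crucial quantitative step is to argue that the starved class's expected value is at most half of what it could have guaranteed itself in the best balanced partition; since each agent's match probability is bounded by its class-level fraction, $\E[V_i(X)]$ is bounded by the class-level mass, and the deterministic rule cannot hedge against the later single-class items.

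The main obstacle I anticipate is ruling out clever deterministic class-level rules that split items unevenly in anticipation of future arrivals. A naive equal split is easy to defeat, but an adaptive deterministic rule might deliberately overweight one class early to compensate later. The fix is that determinism gives the adversary full foresight of the rule's response, so for \emph{any} fixed deterministic class-level policy the adversary picks the continuation (which of the two classes' agents the later items are restricted to) adversarially against the realized split; a short case analysis on which class received the majority of the early class-level mass closes both cases symmetrically and pins the ratio at $\half$. I would then note that $\prop_i$ for the starved class is at least (roughly) the number of items it could have claimed in a balanced partition, whereas its realized expected value is at most half that, yielding the \CPROP{\half} upper bound.
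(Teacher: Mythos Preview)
Your proposal has a genuine gap. The construction you sketch---a first phase of items liked by \emph{all} agents in both classes, followed by items liked only by the majority class---does not force the ratio below~$1$. In that construction, if the deterministic class-level rule simply splits the phase-one items evenly, the starved class receives value $n/2$ while its proportional share is also $n/2$ (class~$1$ values only phase-one items, and the best divisible partition gives each part $n/2$ of them). Phase-two items, being liked only by class~$2$, do not affect $\prop_1$ at all. So your adversary has no bite against the obvious equal-split rule.

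The missing idea is that the adversary must exploit the \emph{individual-level} structure of the graph, not just the class-level budget. The paper's proof uses a tiny explicit instance: two classes of three agents each, and items $o_1,o_2,o_3$ where $o_i$ is liked \emph{only} by $a_i$ and $b_i$. Because each item has a unique neighbor in each class, once the class-level rule deterministically commits to a $2$--$1$ split (say class~$1$ receives only $o_1$), the adversary knows $a_1$ is matched and $a_2,a_3$ are not. A fourth item is then introduced, liked only by $a_1$ (already matched) and $b_1$ (unmatched). This item is worthless to class~$1$ regardless of the class-level decision, yet it raises $\prop_1$ to $2$ via the partition $\{o_1,o_2\},\{o_3,o_4\}$, pinning class~$1$'s ratio at $1/2$. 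The key leverage is arranging edges so that a late item is liked in the starved class only by an already-matched agent---your all-to-all edges cannot produce this blocking effect.
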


\begin{proof}
    Consider two classes $N_1 = \{a_1, a_2, a_3\}$ and $N_2 = \{b_1, b_2, b_3\}$.
    For $1 \le i \le 3$, the $i$-th item is liked by $a_i$ and $b_i$.
    If the algorithm assigns all three items to the same class, it is only \CPROP{0}.
    Otherwise, assume without loss of generality that $2$ items go to class $2$.
    Let the next item be only liked by the matched agent in class $1$ and the unmatched agent in class $2$, as in \Cref{fig::GEF_non}.
    The algorithm is then at best \CPROP{\frac{1}{2}}.
\end{proof}

\begin{proposition}
    If an algorithm assigns deterministically within each class, it is at best \CPROP{\frac{1}{2}}.
\end{proposition}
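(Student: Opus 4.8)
The plan is to drive the algorithm into the classical $\half$ barrier for deterministic online matching by using a \emph{single-class} instance. This is the right minimal setting: the companion proposition needed two classes precisely because class-level determinism is only exploitable when there is a nontrivial class-level choice, whereas within-class determinism is already fatal with one class containing two agents. With a single class $N_1$, the proportional share $\prop_1$ equals the size of a maximum matching between $N_1$ and the items, so it suffices to show that within-class determinism alone caps the expected matching size at half the optimum---exactly the obstruction that defeats deterministic algorithms in the \USW{} part of \Cref{thm:CEF1:impossibility}.

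Concretely, I would take $N_1 = \set{a_1, a_2}$ and let two items arrive. The first item $o_1$ is liked by both $a_1$ and $a_2$. Because the algorithm is deterministic within the class, when $o_1$ arrives the agent that would receive it (if the class does) is a fixed function of the empty history; by symmetry assume this agent is $a_1$. The adversary then sends a second item $o_2$ liked only by $a_1$. Now both items compete for $a_1$: any matched copy of $o_1$ is forced onto $a_1$ by the deterministic within-class rule, and $o_2$ is incident to no other agent, so at most one of $o_1, o_2$ is ever matched. Hence $\E[V_1(X)] \le 1$ on every realization. On the other hand, the matching $o_1 \mapsto a_2$, $o_2 \mapsto a_1$ has size $2$, so $\prop_1 = 2$, and therefore $\E[V_1(X)] \le \half \cdot \prop_1$, which is the claim.

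The step requiring the most care is pinning down what ``deterministic within each class'' permits and arguing the bound against all remaining randomness. I read it as: the choice of which agent in a class receives an item, conditioned on the class receiving it, is a deterministic function of the history, while the class-level decision (and any decision to withhold an item) may still be randomized. The argument above sidesteps a case analysis over such randomness by bounding $\E[V_1(X)]$ uniformly: whether or not $o_1$ is matched, and with whatever probability, the forced incidence of both items on $a_1$ makes $V_1(X) \le 1$ in every outcome. Combined with the companion proposition for class-level determinism, this shows randomness is genuinely needed at both the class level and the individual level.
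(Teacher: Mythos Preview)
Your proof is correct and follows the same route as the paper: reduce to a single class so that \CPROP{} coincides with the competitive ratio of online matching, then invoke the classical $\half$ barrier for deterministic algorithms via the two-agent, two-item instance. The paper is terser (it simply cites the known $\half$ bound and remarks that one can extend to $k$ classes by taking $k$ disjoint copies), while you spell out the concrete instance and carefully argue that any remaining class-level or withholding randomness cannot raise $V_1(X)$ above $1$; both arguments are essentially identical.
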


\begin{proof}
    It becomes apparent when we consider a single class.
    The proposition then reduces to the fact that deterministic online matching algorithms are at best $\frac{1}{2}$-competitive.
    We can extend this hard instance to $k$ classes by making $k$ copies of the class and each item.
\end{proof}

\subsection{Discussion on Randomized Algorithms and CEF}

As discussed in the last subsection, if the the class-level matching depends on which agents are already matched, i.e., if it is adaptive to the realization of randomness in the agent-level matching, then the realization of randomness in an online algorithm, e.g., Ranking, within each class would affect what items get assigned to the class.
How about using a class-level matching algorithm that is oblivious to the randomness in the agent-level matching?
Although such algorithms must violate non-wastefulness in general, we find an algorithm that isn't blatantly wasteful and looks interesting enough to be a stepping stone towards stronger algorithms in future works.

We call this algorithm \algdiscussion.
For each item, it randomly assigns the item to a class with at least one agent who likes the item.
Within each class, it runs a separate Ranking algorithm to match items to agents therein.

\begin{proposition}
    Given an online indivisible instance, \algdiscussion guarantees \cef{(\e)}.
\end{proposition}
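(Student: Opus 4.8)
The plan is to fix two arbitrary classes $i,j\in[k]$ and establish $\E[V_i(X)]\ge(\e)\cdot\E[V_i^*(Y_j(X))]$ by combining the classical $(\e)$-competitiveness of \ranking, applied \emph{within} class $i$, with a symmetry argument between the two classes. First I would set up notation: let $L_\ell$ be the set of items having at least one neighbor in class $\ell$, so that \algdiscussion offers item $o$ to class $\ell$ exactly when $\ell$ is a candidate, i.e. $o\in L_\ell$. Let $S_\ell$ denote the random set of items whose (independent, uniform) class draw selects $\ell$; the $S_\ell$ are pairwise disjoint, the internal \ranking may discard some of them, and hence $Y_\ell(X)\subseteq S_\ell$. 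The key reduction is that only commonly-liked items matter for the envy term: every item of $S_j$ is liked by $j$, so it can contribute to the matching defining $V_i^*(\cdot)$ only if it also lies in $L_i$. Writing $M_{ij}:=L_i\cap L_j$, monotonicity of $V_i^*$ then gives $V_i^*(Y_j)\le V_i^*(S_j)=V_i^*(S_j\cap M_{ij})$.

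The first ingredient is a conditional application of \ranking. Conditioning on all class-level draws fixes the set $S_i$, which arrives in the global order; within class $i$ the algorithm runs \ranking on the sub-instance $(N_i,S_i)$ using fresh ranks independent of the class draws. Since on this bipartite graph the maximum integral and fractional matchings coincide and both equal $V_i^*(S_i)$, the Karp--Vazirani--Vazirani guarantee yields $\E_{\mathrm{ranks}}[V_i(X)\mid S_i]\ge(\e)\cdot V_i^*(S_i)$, and taking expectations over the class draws gives $\E[V_i(X)]\ge(\e)\cdot\E[V_i^*(S_i)]$.

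The second ingredient is the symmetry/coupling argument. For each $o\in M_{ij}$ both $i$ and $j$ are candidate classes, so $\Pr[o\in S_i]=\Pr[o\in S_j]=\sfrac{1}{k_o}$, and the per-item map that swaps the outcomes ``$i$'' and ``$j$'' is measure-preserving. Because the class draws are independent across items, the random sets $S_i\cap M_{ij}$ and $S_j\cap M_{ij}$ are identically distributed, whence $\E[V_i^*(S_j\cap M_{ij})]=\E[V_i^*(S_i\cap M_{ij})]\le\E[V_i^*(S_i)]$ by monotonicity. Chaining the inequalities, $\E[V_i^*(Y_j)]\le\E[V_i^*(S_j\cap M_{ij})]=\E[V_i^*(S_i\cap M_{ij})]\le\E[V_i^*(S_i)]\le\tfrac{1}{\e}\,\E[V_i(X)]$, which is exactly \cef{(\e)}.

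I expect the main obstacle to be the symmetry argument: one must recognize that restricting to the commonly-liked items $M_{ij}$ is precisely what makes classes $i$ and $j$ exchangeable under the uniform class draw, and that discarded items (so that $Y_j\subsetneq S_j$) only help via monotonicity. The secondary care point is the conditioning: \ranking's $(\e)$ bound must be invoked \emph{after} fixing the class-level randomness, so that within a class we genuinely face a fixed adversarial-order instance with independent ranks, and one must note the bipartite integrality so that $V_i^*(S_i)$ is the correct benchmark \textsc{opt}.
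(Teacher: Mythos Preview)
Your proposal is correct and follows essentially the same approach as the paper: condition on the class-level draws to apply the $(\e)$ guarantee of \ranking within class $i$, then compare $S_i$ and $S_j$ via the fact that on the commonly-liked items $M_{ij}$ the two classes are symmetric under the uniform class draw, together with monotonicity of $V_i^*$. The only cosmetic difference is that the paper phrases the comparison as stochastic dominance of $y_i$ over $\hat{y}_j$ (the restriction of $y_j$ to items liked by class $i$), whereas you use the equivalent observation that $S_i\cap M_{ij}$ and $S_j\cap M_{ij}$ are identically distributed; both rely on the same independence and monotonicity.
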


\begin{proof}
    Consider any class $i$ and any other class $j$.
    Let $y_i = (y_{io})_{o \in M} \in \{0, 1\}^M$ be the vector that represents the subset of items assigned to $i$ by \algdiscussion at the class-level, regardless of whether such the items are matched to agents successfully.
    Define $y_j$ similarly.
    Note that both $y_i$ and $y_j$ are random variables that depend on the class-level random assignments of items.
    Finally, let $X = (x_{ao})_{a \in N, o \in M} \in \{0, 1\}^{N \times M}$ be the matrix that represents the matching by \algdiscussion.
    We seek to prove that:
    \[
        \E[V_i(X)] \ge (\e)\,\E [V_i^*(y_j)] 
        ~.
    \]
    
    Conditioned on the subset of items assigned to $i$, i.e., $y_i$, the Ranking algorithm ensures that (see, e.g., \citet{karp1990optimal}):
    \[
        \E\,\big[ V_i(X) \mid y_i \big] \ge (\e)\,V_i^*(y_i)
        ~.
    \]
    
    It remains to show that:
    \[
        \E [V_i^*(y_i)] \ge \E [V_i^*(y_j)] 
        ~.
    \]
    
    Define $\hat{y}_j$ be such that $\hat{y}_{jo} = y_{jo}$ if class $i$ has at least one agent who likes item $o$, and $\hat{y}_{jo} = 0$ otherwise.
    By definition $V_i^*(y_j) = V_i^*(\hat{y}_j)$ and therefore it suffices to prove:
    \[
        \E [V_i^*(y_i)] \ge \E [V_i^*(\hat{y}_j)] 
        ~.
    \]
    
    Note that for any item $o$, \algdiscussion ensures that the probability that $y_{io} = 1$ is greater than or equal to the probability that $\hat{y}_{jo} = 0$.
    Further, the assignment of items at the class-level are independent.
    Hence we get that random variable $y_i$ stochastically dominates $\hat{y}_j$.
    The above inequality now follows by the monotonicity of $V_i^*$.
\end{proof}

\end{document}